\newtheorem{lem}{Lemma} 
\newtheorem{thm}{Theorem}
\newtheorem{definition}{Definition}
\def\ln{{\rm ln}}
\def\mc{\mathcal}
\def\mb{\mathbf}
\def\mbb{\mathbb}
\def\ra{\rightarrow}
\begin{document}
\title{\begin{color}{black}Localization in mobile networks via\\ virtual convex hulls\end{color}}
\author{Sam Safavi,~\emph{Student Member,~IEEE}, and Usman A. Khan,~\emph{Senior Member,~IEEE}\thanks{The authors are with the Department of Electrical and Computer Engineering, Tufts University, 161 College Ave, Medford, MA 02155, {\texttt{\{sam.safavi@,khan@ece.\}tufts.edu}}. This work is partially supported by an NSF Career award: CCF \# 1350264.}}
\maketitle
\thispagestyle{empty}

\begin{abstract}
In this paper, we develop a \textit{distributed} algorithm to localize an arbitrary number of agents moving in a bounded region of interest. We assume that the network contains \textit{at least one} agent with known location (hereinafter referred to as an anchor), and each agent measures a noisy version of its motion and the distances to the nearby agents. We provide a~\emph{geometric approach}, which allows each agent to: (i) continually update the distances to the locations where it has exchanged information with the other nodes in the past; and (ii) measure the distance between a neighbor and any such locations. Based on this approach, we provide a \emph{linear update} to find the locations of an arbitrary number of mobile agents when they follow some convexity in their deployment and motion. 

Since the agents are mobile, they may not be able to find nearby nodes (agents and/or anchors) to implement a distributed algorithm. To address this issue, we introduce the notion of a \emph{virtual convex hull} with the help of the aforementioned geometric approach. In particular, each agent keeps track of a virtual convex hull of other nodes, which may not physically exist, and updates its location with respect to its neighbors in the virtual hull. We show that the corresponding localization algorithm, in the absence of noise, can be abstracted as a Linear Time-Varying (LTV) system, with non-deterministic system matrices, which asymptotically tracks the true locations of the agents. We provide simulations to verify the analytical results and evaluate the performance of the algorithm in the presence of noise on the motion as well as on the distance measurements. 
\end{abstract}

\section{Introduction}\label{sec1}
Localization is a well-studied problem, which refers to the collection of algorithms that estimate the location of nodes in a network. Relevant applications include traffic control, industrial automation, robotics, and environment monitoring~\cite{estrin2001instrumenting,pottie2000wireless,estrin1999next,steere2000research}.
In terms of the information used for estimating the locations, localization schemes in Wireless Sensor Networks (WSN) can be classified as range-based,\textcolor{black}{\cite{Pandey2016,zhang2005range,7496598}},
and range-free,\textcolor{black}{\cite{hu2004localization,dil2006range,rudafshani2007localization,stevens2007dual}}. 
While the former depends on measuring the distance and/or angle between the nodes, 
	the latter makes no assumptions about the availability of such information and relies on the connectivity of the network.
The literature on localization also consists of centralized and distributed approaches. Centralized algorithms,~\cite{patwari2003relative,thomas2005revisiting,roweis2000nonlinear,aspnes2006theory}, despite their benefits, are impractical in large networks where each node has limited power and communication capability. 
Some notable distributed localization techniques include successive refinements,~\cite{albowicz2001recursive,savarese2001location,vcapkun2002gps},
multilateration,~\cite{sriv:02,nagpal2003organizing,savvides2001dynamic}, \textcolor{black}{estimation based methods, multidimentional scaling~\cite{ji2004sensor}}, and graph-theoretical methods,~\cite{anderson2009graphical,deghat2011distributed,lederer2009connectivity}. 
%

When ranging data is noisy, the following estimation-based localization techniques are widely used. Maximum Likelihood Estimation (MLE) methods estimate the most probable location of a mobile node based on prior statistical models. MLE only requires (current) measured data and does not integrate prior data. Sequential Bayesian Estimation (SBE) methods use the recursive Bayes rule to estimate the likelihood of an agent's location. The solution to SBE is generally intractable and cannot be determined analytically. An alternative approach is Kalman-based techniques, which are only optimal when the uncertainties are Gaussian and the system dynamics are linear,~\cite{fox2003bayesian,amundson2009survey}. However, localization has always been considered as a nonlinear problem and hence the optimality of Kalman-based solutions are not guaranteed. To address the nonlinear nature of localization problems, other suboptimal solutions to approximate the optimal Bayesian estimation include Extended Kalman Filter (EKF) and Particle Filters (PF),~\cite{amundson2009survey}. In particular, Sequential Monte Carlo (SMC) method is a PF that exploits posterior probability to determine the future location of an agent. 

\textcolor{black}{Localization algorithms specifically designed for mobile networks have been proposed in~\cite{hu2004localization,rudafshani2007localization,stevens2007dual,dil2006range}. 
Ref.~\cite{hu2004localization} introduces the Monte Carlo Localization (MCL) method, which exploits mobility to improve the accuracy of localization. Inspired by~\cite{hu2004localization}, the authors in~\cite{rudafshani2007localization} propose Mobile and Static sensor network Localization (MSL*) that extends MCL to the case where some or all nodes are static or mobile. Ref~\cite{stevens2007dual} proposes variations of MCL, namely dual and mixture MCL to increase the accuracy of the original MCL algorithm. On the other hand, Ref~\cite{dil2006range} provides Range-based Sequential Monte Carlo Localization method (Range-based SMCL), which combines range-free and range-based information to improve localization performance in mobile sensor networks.}

In this paper, we consider localization of a mobile network, assuming that each agent measures a noisy version of its distances to the nodes (agents and/or anchors) in its communication radius\footnote{In wireless networks, the distance may be estimated with, e.g., Received Signal Strength (RSS), Time of Arrival (ToA), Time Difference of Arrival (TDoA) measurements,~\cite{PatwariThesiss}. We further note that the distance information can also be obtained using a camera at each agent,~\cite{camera}.}, and the motion it undertook, e.g., by using an accelerometer. 
We focus on scenarios where GPS (or related positioning system) is compromised and/or unavailable, and develop a collaborative algorithm for a network of agents/robots to find their locations in order to implement a location-aware task or a mission.
{\textcolor{black}{As an example, consider a network of mobile robots with no central or local coordinator and with limited communication, whose task is to transport goods in an indoor facility where GPS signals are not available. In order to perform a delivery task, each mobile robot has to know its own location first.} In such settings, we are interested in developing \textcolor{black}{an anchor-based \textit{distributed}} algorithm \textcolor{black}{that uses only the distance and angle measurements} to track the agent locations such that the convergence is \textcolor{black}{\textit{invariant to the initial position estimates}}. However, the implementation \textcolor{black}{and analysis of such distributed algorithm} in mobile networks is not straightforward, because: 
\begin{inparaenum}[(i)]
\item an agent may not be able to find nearby nodes \emph{at any given time} to implement a distributed algorithm;
\item an agent may not be in the proximity of an anchor \emph{at any given time};
\item the neighborhood at each agent is dynamic, resulting into a \textit{time-varying} distributed algorithm. 
\end{inparaenum}

In this context, the main contribution of this work is to develop a \textit{linear} framework for localization that enables us to circumvent the challenges posed by the predominant  nonlinear approaches to this problem. This linear framework is not to be interpreted as a linearization of an existing nonlinear algorithm. Instead, the nonlinearity from range to location is embedded in an alternate representation provided by the barycentric coordinates. However, forcing this linear update comes at a price, i.e., a location update is only possible when an agent lies in a triangle (convex hull in $\mathbb{R}^2$) formed by three neighboring nodes. Since such neighbors may never exist, see (i) above, we introduce the notion of a \emph{virtual convex hull}. We then show that localization in mobile networks can be achieved with \textit{exactly one anchor} when the total number of nodes (agents and anchors) is at least~$4$ in~$\mbb{R}^2$. \begin{color}{black}This is in stark contrast to the requirement of at least $3$ anchors that is prevalent in the triangulation-based localization literature. See~\cite{7438736} for a recent survey on localization approaches with single vs. multiple anchor nodes.\end{color}

As we will discuss in Section~\ref{sec5}, we abstract the corresponding localization algorithm as a Linear Time-Varying~(LTV) system whose system matrices may be: \emph{identity}--when no location update occurs; \emph{stochastic}--when there is no anchor among the neighbors of the updating agent; and, \emph{strictly sub-stochastic}--when the neighborhood includes at least one anchor. Note that the order in which the updates occur and the weights assigned to the neighbors are not known a priori. To address these challenges, we \textcolor{black}{apply} a novel method to study LTV convergence by partitioning the entire chain of system matrices into \textit{slices} and {\textcolor{black}{relating} the convergence rate to the slice lengths. In particular, we show that the algorithm converges if the slice lengths do not grow faster than a certain exponential rate. Since our localization scheme is based on the motion and the distance measurements, it is meaningful to evaluate the performance of the algorithm when these parameters are corrupted by noise. Therefore, we study the impact of noise on the convergence of the algorithm and provide modifications to counter the undesirable effects of noise. Owing to the linear representation, we are able to provide a few simple yet comprehensive modifications to the proposed approach, \textcolor{black}{which as shown in the simulations lead to a bounded error in the location estimates.}

We now describe the rest of the paper. In Section~\ref{sec2}, we formulate the problem. We introduce the geometric approach to track the distances in Section~\ref{sec3}. We then provide our localization algorithm that relies on the notion of virtual convex hull in Section~\ref{sec4}, followed by the convergence analysis in Section~\ref{sec5}. In Section~\ref{noise}, we examine the effects of noise on the algorithm. \textcolor{black}{We provide simulation results and compare the performance of the algorithm with MCL, MSL*, Dual MCL, and Range-based SMCL algorithms in Section~\ref{sec7}. Finally, Section~\ref{sec9} concludes the paper.}

\section{Preliminaries and Problem formulation}\label{sec2}
Consider a network of~$N$ mobile agents, in the set~$\Omega$, with unknown locations, {\textcolor{black}{and~$M$ anchor(s), in the set~$\kappa$, with fixed known locations}, all located in~$\mathbb{R}^{2}$; let~$\Theta=\Omega\cup\kappa$ be the set of all nodes. Let~$\mb{x}_k^{i\ast}\in\mbb{R}^2$ be a row vector that denotes the \emph{true location} of the~$i$-th agent,{\textcolor{black}{~$i\in\Omega$}, at time~$k$, where~$k\geq0$ is the discrete-time index.  Regardless of what motion modality (aerial, ground, kinematics) is employed by the agents, their motion can be expressed as the deviation between the current and next locations, i.e., 
\begin{eqnarray}\label{Eq1}
\mb{x}_{k+1}^{i\ast} = \mb{x}_{k}^{i\ast} + {\widetilde{\mb{x}}_{k+1}^i},\qquad i\in\Omega,
\end{eqnarray}
where~${\widetilde{\mb{x}}_{k}^i}$ is the true motion vector at time $k$. \textcolor{black}{We assume that the agents move along straight lines, or otherwise the motion trajectories can be approximated by a piecewise linear model. This is a common assumption in robotics literature~\cite{choset2005principles}, e.g., a simple differential wheeled robot whose wheels run in the same direction and speed, will move in a straight line.} We also assume that agent~$i$ measures a noisy version,~${\widehat{\mb{x}}_{k}^i}$, of this motion, e.g., by using an accelerometer: 
\begin{eqnarray}\label{Eq2}
{\widehat{\mb{x}}_{k}^i}={\widetilde{\mb{x}}_{k}^i}+n^i_{k},
\end{eqnarray}
where $n^i_{k}$ is the accelerometer noise at time $k$. We assume that the motion is restricted in a bounded region in~$\mbb{R}^2$. We define the distance between any two nodes, $i$ and~$j$, \textcolor{black}{measured at the time of communication,~$k$}, as {\textcolor{black}{$\widetilde{d}^{ij}_k$}. We assume that the distance measurement,~$\widehat{d}^{ij}_k$, at node $i$ is not perfect and includes noise,~i.e.,
\begin{equation}\label{Eq3}
\widehat{d}^{ij}_k = {\textcolor{black}{\widetilde{d}^{ij}_k}} + r^{ij}_k,
\end{equation}
in which $r^{ij}_{k}$ is the {\textcolor{black}{noise in} the distance measurement at time~$k$. \textit{The problem is to find the locations of the mobile agents in the set~$\Omega$ given any initialization of the underlying algorithm.} We then consider the minimal number of anchors required for such a process to work. 

{\bf Main idea:} We now describe the main idea behind our approach. In order to avoid nonlinearity in the solution\footnote{When only distances to at least~$m+1$ anchors are known \textcolor{black}{in~$\mbb{R}^m$},~trilateration, e.g., in~$\mbb{R}^2$, requires $3$ anchors and solves $3$ circle equations,~\cite{4407221}. Clearly, trilateration is nonlinear and coupled in the coordinates; while an iterative procedure built on these nonlinear updates does not converge in general.}, we consider a barycentric-based linear representation of positions; the advantage of a linear approach is that the convergence is independent of the initial location estimates. Let $\Theta_i(k)$ denote a set of three agents such that agent~$i$ lies inside {\textcolor{black}{their} convex hull,~$\mc{C}(\Theta_i(k))$,
at time~$k$. In the barycentric-representation, the location of node $i$ is described by a linear-convex combination of the nodes in~$\Theta_i(k)$, i.e.,
\begin{equation}\label{Eq5}
\mathbf{x}_k^{i\ast}=a_{k}^{{i}1} \mathbf{x}_k^{1\ast}+a_{k}^{{i}2} \mathbf{x}_k^{2\ast}+a_{k}^{{i}3} \mathbf{x}_k^{3\ast},\qquad\Theta_i(k)=\{1,2,3\}
\end{equation}
in which $\mathbf{x}_k^{i\ast}$ represents the true location of node~$i$ at time~$k$, and the coefficients, $a_{k}^{{i}j}$'s, are the \emph{barycentric coordinates}, associated to M{\"o}bius,~\cite{mobius1827barycentrische}, given by
\begin{equation}\label{Eq6}
a_{k}^{{i}j}=\frac{A_{\Theta_i(k)\cup\{i\}\setminus j}}{A_{\Theta_i(k)}},
\end{equation}
where~$A_{\Theta_i(k)\cup\{i\}\setminus j}$ denotes the area of the convex hull formed by the agents in the subscript set. To implement this update, agent~$i$ has to lie inside the convex hull of the (three) neighbors in the set~$\Theta_i(k)$. {\textcolor{black}{Appendix \ref{incl}}} provides a simple procedure,~\cite{khan2009distributed}, to test if an agent lies inside or outside of the convex hull formed by the nearby nodes. Finally, note that the barycentric coordinates are always positive and they sum to~$1$.

{\bf Challenges and our approach}: To implement the above barycentric-based procedure, agent~$i$ must acquire the mutual distances among the~$4$ nodes (in the set $\{i\cup\Theta_i(k)\}$). However, an agent may never find such a convex hull because the nodes are mobile, in a region possibly full of obstacles, with limited communication and/or visual radius. To address this challenge, we provide a geometric approach that allows an agent to track its distance to any location, where it has exchanged information with the other nodes in the past (Section~\ref{gm1}). We further show that at the time of communication an agent can compute the distance between a neighbor and any such locations (Section~\ref{gm2}). Using these methods, we introduce the notion of a virtual convex hull, which is a triangle whose vertices are located at the virtual locations where the agent exchanged information in the past (Section~\ref{sec4}). The \textcolor{black}{described approach} results in a rather opportunistic algorithm where an update occurs only when a certain set of real and virtual conditions are satisfied. The subsequent analysis is to characterize the convergence of such LTV algorithms whose system matrices are non-deterministic. In this paper, we rigorously develop the conditions of convergence and study the location-tracking performance of this procedure (Section~\ref{sec5}). 

We now enlist our assumptions:

{\bf A0:} Each anchor,~$i\in\kappa$, knows its location,{\textcolor{black}{~$\mb{x}_k^{i\ast}=\mb{x}_0^{i\ast}$}, $\forall k$.

{\bf A1:} Each agent,~$i\in\Omega$, has a noisy measurement,~$\widehat{\mb{x}}_k^i$, of its motion vector,~${\widetilde{\mb{x}}_{k}^i}$, see Eq.~\eqref{Eq2}.

{\bf A2:} Each agent,~$i \in \Omega$, has a \textcolor{black}{noisy measurement,~$\widehat{d}^{ij}_k$,} of the distances to nodes, $j \in \Theta$ within a radius,~$r$, see Eq.~\eqref{Eq3}.

Under the above assumptions, we are interested in finding the true locations of each agent in the set~$\Omega$, without the presence of any central coordinator. In the process, we study what is the minimal number of anchors required and in particular: does mobility reduce the number of anchors\footnote{The minimal number of anchors required to solve a localization problem, without ambiguity, given only distance measurements is~$m$ in~$\mbb{R}^m$,\textcolor{black}{~\cite{4407221}}.} from~$3$ in~$\mathbb{R}^2$ (Section~\ref{sec5}). In the following, we first discuss the ideal scenario when the motion and distance measurements are not \textcolor{black}{affected by} noise. We then evaluate the performance of the algorithm in the presence of {\textcolor{black}{noise}} on the motion as well as on the distance measurements, and present a modified algorithm to counter the undesirable effects of noise (Section~\ref{noise}).

\section{A Geometric Approach towards Localization}\label{sec3}
\textcolor{black}{In this section, we} consider the motion and distance in the noiseless case, i.e.,~$n_k^i=0$ and $r_k^{ij}=0$, in Eqs.~\eqref{Eq2} and~\eqref{Eq3}.
Let the \textit{true} location of the~$i$-th agent,~$i\in\Omega$ in~$\mbb{R}^2$, be decomposed as~${\mb{x}_k^{i\ast}}=[{x_k^{i\ast}}~{y_k^{i\ast}}]$. We then describe the motion of agent $i$ as:
\begin{eqnarray}\label{1}
{x_{k+1}^{i\ast}}&=&{x_{k}^{i\ast}}+d_{k\rightarrow k+1}^i\cos\left(\textcolor{black}{\sum_{{k}^{\prime}=1}^{k+1}\theta_{k^{\prime}\rightarrow k^{\prime}+1}^i}+\theta_{0}^i\right),\nonumber\\
{y_{k+1}^{i\ast}}&=&{y_{k}^{i\ast}}+d_{k\rightarrow k+1}^i\sin\left(\textcolor{black}{\sum_{{k}^{\prime}=1}^{k+1}\theta_{k^{\prime}\rightarrow k^{\prime}+1}^i}+\theta_{0}^i\right),
\end{eqnarray}
where~$d^i_{k\rightarrow k+1}$ and~$\theta^i_{k\rightarrow k+1}  $ denote the distance and angle traveled by agent~$i$, between time~$k$ and~$k+1$, and~$\theta_{0}^i$ is agent~$i$'s \textcolor{black}{initial orientation}\footnote{Note that these initial orientations,~$\theta_{0}^i$'s, are arbitrary and we \textcolor{black}{neither assume any global synchronization nor we know what the true angles are.}}, see Fig.~\ref{f0}. 
\begin{figure}[!h]
	\centering
	\includegraphics[width=1.5in]{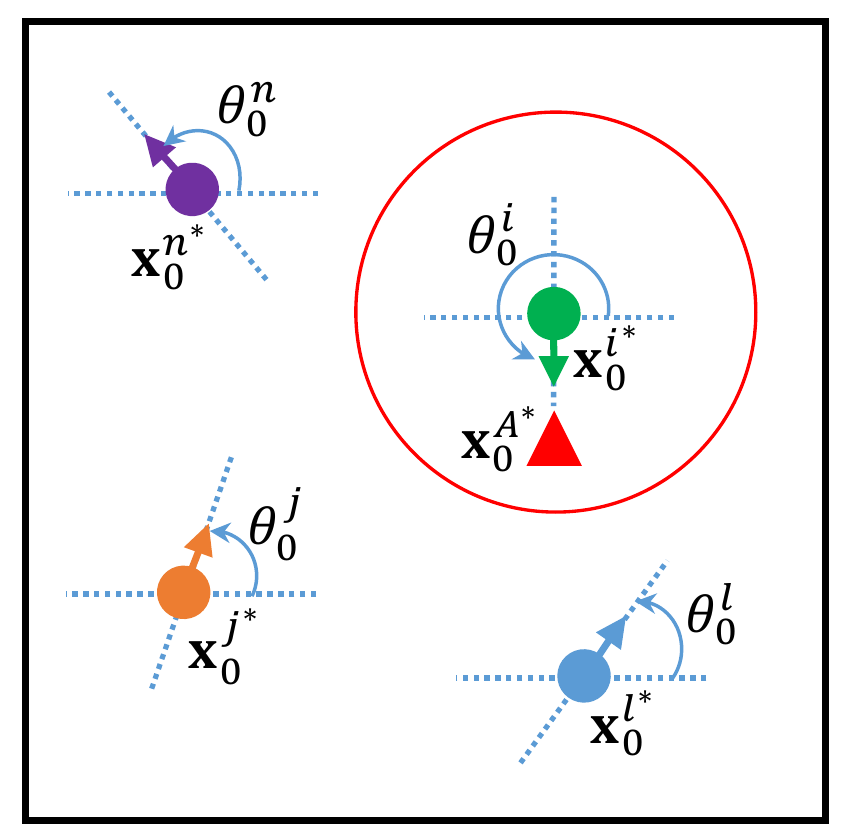}
	\caption{Initial orientations of $N=4$ agents; filled colored circles and red triangle indicate the initial locations of the agents, $i$, $j$, $l$, $n$ and the anchor,~$A$, respectively; red circle shows the communication radius of agent $i$.}
	\label{f0}
\end{figure}
Let us also consider a vision-based distance measurement process, where an agent needs to see another agent with its camera in order to find the mutual distance,
as opposed to wireless-based distance measurements that are prone to extremely large errors.
When agent $i$ moves to a new location \textcolor{black}{at time $k$, it first scans the neighborhood, i.e., rotates by  \textcolor{black}{an angle,~$0\leq\beta_k^i<2\pi$}}, in order to find (and make visual contact with) another node (agent or anchor). If the agent does not find any neighbor at time $k$, \textcolor{black}{then~$\beta_k^i=0$}. If on the other hand, agent $i$ finds another node within the communication radius, it exchanges information with that node, \textcolor{black}{then makes another rotation by an angle,~$0 \leq \alpha_k^i<2\pi$, which is randomly chosen at each iteration, and travels the distance of~$d_{k\rightarrow k+1}^i$ in the new direction}. Thus, the angle traveled by agent $i$ between time~$k$ and~$k+1$,~$\theta_{k\rightarrow k+1}^i$, can be represented~as
\begin{equation}\label{2}
\theta_{k\rightarrow k+1}^i=\alpha_{k}^i+\beta_{k}^i \qquad \textcolor{black}{k\geq0}.
\end{equation}
\textcolor{black}{Before we proceed, note that under perfect motion and noiseless distance measurements, an agent cannot find its position by direct communication with an anchor, unless it knows its angle towards the anchor with respect to a mutual frame; in contrast, the proposed framework in this paper assumes that \textit{only relative angles}  are known to each agent. For example, agent $i$ in Fig.~\ref{f0} initially finds anchor $A$ in its communication radius. However, in order to find its exact location, agent $i$ needs to know~$\theta_0^{i}$, in addition to $\mb{x}_{0}^{A\ast}$ and $\widetilde{d}_0^{iA}$. We assume that such angles are not available to the agents.}

In what follows, we first explain the procedure to track the distance between an agent \textcolor{black}{and the locations where it has exchanged information with other nodes in the past. We then show how an agent can compute the distance between a neighbor and \textcolor{black}{any such location} at the time of communication.}

\subsection{\textcolor{black}{Tracking the distance after a direct communication}}\label{gm1}
Consider an agent,~$i\in\Omega$, to fall within a distance,~$r$, of  node,~$j\in\Theta$, at some time~$k_j$; by \textcolor{black}{Assumption} {\bf A2}, agent~$i$ can measure its distance,~\textcolor{black}{$\widetilde{d}_{k_j}^{ij}$} to node~$j$, and receive $j$-th node's location estimate, hereinafter denoted as~$\mb{x}_{k_j}^j$. 
Once this information is acquired at time~$k_j$, agent~$i$ tracks the distance to the \emph{true position},~$\mb{x}_{k_j}^{j\ast}$, for all~$k\geq k_j$, even when the \textcolor{black}{two agents} move apart. In other words, agent~$i$, has the following information for each node it has communicated with:
\begin{eqnarray}\label{eq4}
\{j,k_j,\textcolor{black}{\widetilde{d}_{k}^{ij}},\mb{x}_{k_j}^j\},\qquad \textcolor{black}{k\geq k_j},j\in\Theta,
\end{eqnarray}
where~$k_j$ is the instant of the most-recent contact, \textcolor{black}{and ~$\widetilde{d}_{k}^{ij}$ is the distance between the true positions,~$\mb{x}_{k}^{i\ast}$ and~$\mb{x}_{k_j}^{j\ast},k \geq k_j$.}
This procedure is illustrated in Fig.~\ref{f2}.
\begin{figure}[!h]
	\centering
	\includegraphics[width=2.5in]{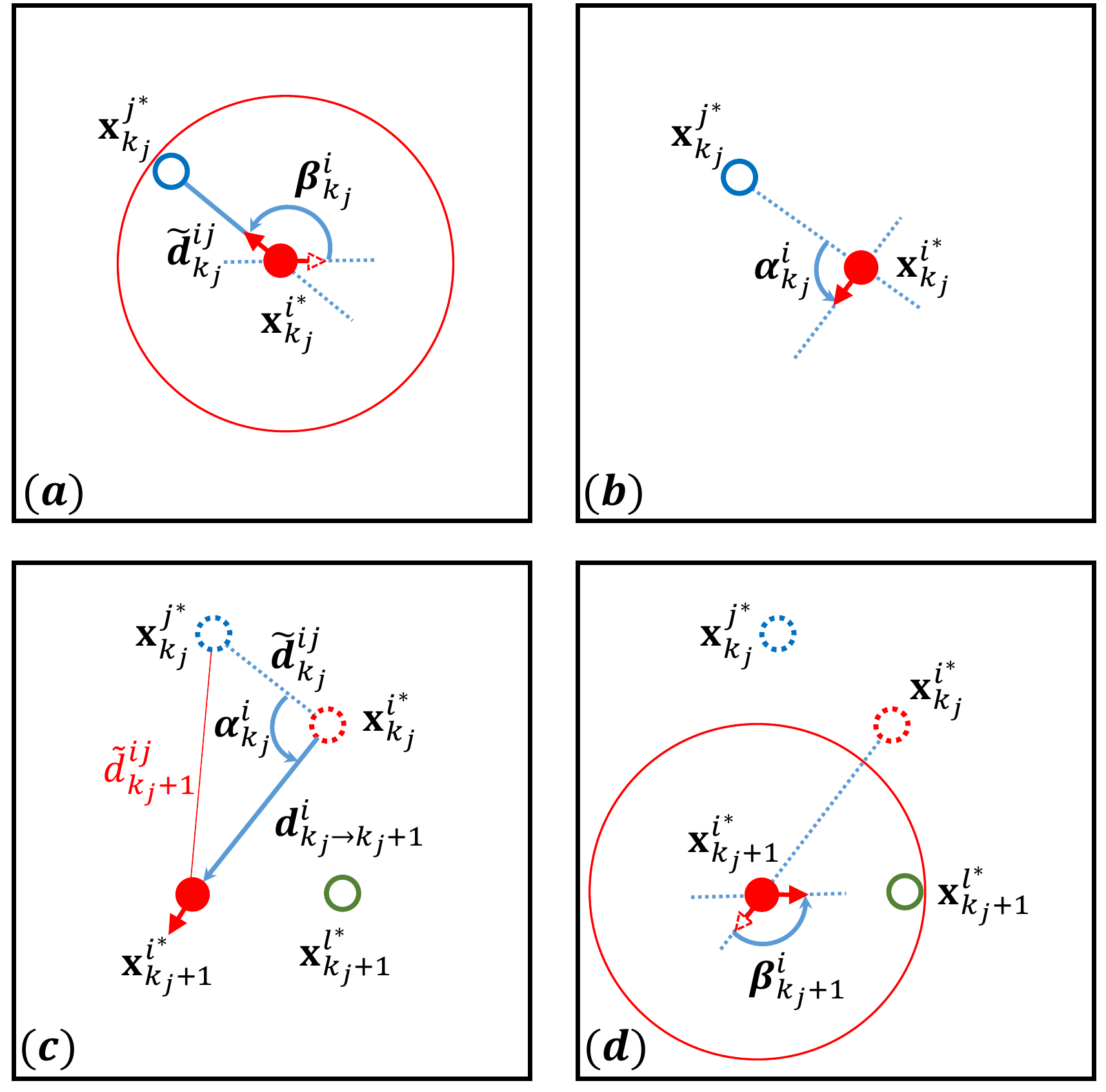}
	\caption{Distance tracking after a direct communication; agent~$i$ is indicated by red filled circle; agents $j$ and $l$ are represented by blue and green circles, respectively; red circle indicates the communication radius.}
	\label{f2}
\end{figure}
 Note that in order to make contact with agent $j$, agent $i$ has changed its orientation by $\beta_{k_j}^i$ at time $k_j$, see Fig.~\ref{f2} (a). After the two agents exchange information, agent $i$ changes its orientation by~$\alpha_{k_j}^i$, Fig.~\ref{f2} (b), and travels the distance of $d^i_{k_j\rightarrow {k_j}+1}$ in the new direction, Fig.~\ref{f2}~(c).
We now show how an agent tracks the distance to \textcolor{black}{a virtual location, where it has exchanged information with another node in the past.}

\begin{lem}\label{lem1}
	\textcolor{black}{Consider agent,~$i\in\Omega$, with true position,~$\mb{x}_{k_j}^{i\ast}$. Suppose agent $i$ communicates with node,~$j\in\Theta$, at time~$k_j$}. Let~\textcolor{black}{$\widetilde{d}_{k_j}^{ij}$} be the distance between~$i$ and~$j$ \textcolor{black}{at the time of communication}. Suppose agent~$i$ and~$j$ move apart at time~$k_j+1$, where the motion is given by~$d_{k_j\rightarrow k_j+1}^i$ and~$\theta_{k_j\rightarrow k_j+1}^i$, both known to agent~$i$.
	\noindent The distance between the true positions,~$\mb{x}_{k_j+1}^{i\ast}$ and~$\mb{x}_{k_j}^{j\ast}$, is
	{\small\begin{eqnarray}\label{3-1}
	{\widetilde{d}_{k_j+1}^{ij}} = \Big((\textcolor{black}{\widetilde{d}_{k_j}^{ij}})^{2} + ({d_{k_j\rightarrow {k_j}+1}^i})^{2}- 2{\textcolor{black}{\widetilde{d}_{k_j}^{ij}}}d_{k_j\rightarrow {k_j}+1}^i\cos(\alpha^i_{k_j})\Big)^\frac{1}{2}.
	\end{eqnarray}}
\end{lem}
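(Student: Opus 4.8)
The plan is to prove Lemma~\ref{lem1} by a direct application of the law of cosines in the triangle whose vertices are the three relevant points: the point where agent~$i$ sits at time~$k_j$ (call it~$P = \mb{x}_{k_j}^{i\ast}$), the point where node~$j$ sits at time~$k_j$ (call it~$Q = \mb{x}_{k_j}^{j\ast}$), and the point where agent~$i$ moves to at time~$k_j+1$ (call it~$P' = \mb{x}_{k_j+1}^{i\ast}$). The side lengths are $|PQ| = \widetilde{d}_{k_j}^{ij}$ (the measured communication distance), $|PP'| = d_{k_j\rightarrow k_j+1}^i$ (the known travelled distance), and $|P'Q| = \widetilde{d}_{k_j+1}^{ij}$ (the quantity we want). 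So the only real content is to identify the interior angle of this triangle at the vertex~$P$ and show it equals~$\alpha^i_{k_j}$.

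First I would set up coordinates, or rather argue directly from the motion model in Eqs.~\eqref{1}--\eqref{2}. Just before communicating with~$j$ at time~$k_j$, agent~$i$ is heading in some direction; it then rotates by~$\beta_{k_j}^i$ to make visual contact with~$j$, so immediately after the scan its heading points exactly along the ray $\overrightarrow{PQ}$ toward node~$j$ (this is precisely why the scan angle~$\beta_{k_j}^i$ is chosen the way it is --- the agent rotates until it sees~$j$). Then, per the description preceding Eq.~\eqref{2}, agent~$i$ rotates by a further angle~$\alpha_{k_j}^i$ and travels distance~$d_{k_j\rightarrow k_j+1}^i$ in that new heading to reach~$P'$. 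Hence the angle between the ray $\overrightarrow{PQ}$ (direction to node~$j$) and the ray $\overrightarrow{PP'}$ (direction of travel) is exactly~$\alpha_{k_j}^i$, which is the interior angle of triangle~$PQP'$ at vertex~$P$.

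With the angle identified, the law of cosines applied to triangle~$PQP'$ at vertex~$P$ gives
\begin{equation*}
|P'Q|^2 = |PQ|^2 + |PP'|^2 - 2\,|PQ|\,|PP'|\cos\left(\angle QPP'\right),
\end{equation*}
and substituting $|PQ| = \widetilde{d}_{k_j}^{ij}$, $|PP'| = d_{k_j\rightarrow k_j+1}^i$, $\angle QPP' = \alpha_{k_j}^i$, and $|P'Q| = \widetilde{d}_{k_j+1}^{ij}$, then taking the nonnegative square root, yields exactly Eq.~\eqref{3-1}. One should remark that agent~$i$ can actually evaluate this expression: it measured~$\widetilde{d}_{k_j}^{ij}$ by Assumption~\textbf{A2}, and it knows both~$d_{k_j\rightarrow k_j+1}^i$ and~$\theta_{k_j\rightarrow k_j+1}^i = \alpha_{k_j}^i + \beta_{k_j}^i$ by Assumption~\textbf{A1}; since it also knows~$\beta_{k_j}^i$ (the scan angle it itself executed), it recovers~$\alpha_{k_j}^i$.

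The main obstacle --- and it is more a matter of careful bookkeeping than of genuine difficulty --- is pinning down the claim that after the $\beta_{k_j}^i$-rotation the heading of agent~$i$ points exactly at node~$j$, and consequently that the angle $\angle QPP'$ equals~$\alpha_{k_j}^i$ and not, say, $\alpha_{k_j}^i$ offset by some residual term or its supplement. This requires being precise about the convention for measuring the scan angle~$\beta_{k_j}^i$ and the turn angle~$\alpha_{k_j}^i$ (signed vs.\ unsigned, measured from which reference direction), and about the degenerate cases --- e.g.\ if $\alpha_{k_j}^i \in \{0,\pi\}$ the three points are collinear and the "triangle" is degenerate, but Eq.~\eqref{3-1} still holds as the law of cosines reduces to $\widetilde{d}_{k_j+1}^{ij} = |\widetilde{d}_{k_j}^{ij} \pm d_{k_j\rightarrow k_j+1}^i|$, which is correct. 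A clean figure (Fig.~\ref{f2}) essentially settles the geometry; I would accompany the argument with an explicit reference to panels~(a)--(c) of that figure so the reader can see the two successive rotations and the resulting triangle.
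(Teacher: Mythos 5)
Your proposal is correct and follows essentially the same route as the paper's own proof: the law of cosines applied to the triangle with vertices~$\mb{x}_{k_j}^{i\ast}$,~$\mb{x}_{k_j+1}^{i\ast}$, and~$\mb{x}_{k_j}^{j\ast}$, with the included angle at~$\mb{x}_{k_j}^{i\ast}$ identified as~$\alpha^i_{k_j}$. In fact, your explicit justification that the scan by~$\beta^i_{k_j}$ leaves the agent heading along the ray toward node~$j$ (so that the subsequent turn by~$\alpha^i_{k_j}$ is exactly the interior angle) is more careful than the paper, which simply asserts this angle by pointing to Fig.~\ref{f2}~(c).
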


\textcolor{black}{See Appendix~\ref{pr1} for the proof.}
\noindent \textcolor{black}{Note that at time ${k_j}+1$, agent~$i$ finds agent $l$ in its communication radius, hence changes its direction by $\beta_{{k_j}+1}^i$ in order to make contact with agent $l$, see Fig~\ref{f2} (d).}

\subsection{\textcolor{black}{Finding the distance with indirect communication}}\label{gm2}
We now show how agent~$i$ can use the \textcolor{black}{distance/angle} information to find the distances between a neighbor and any virtual location, where it has previously communicated with another node.
\begin{lem}\label{lem2}
	Suppose agent~$i$ has previously \textcolor{black}{made contact with} node,~$j$, hence possesses the following information:
	\begin{eqnarray*}
		\{ j,k_j,\textcolor{black}{\widetilde{d}_{k}^{ij}},\mb{x}_{k_j}^j\},\qquad& \textcolor{black}{k\geq k_j},
	\end{eqnarray*} 
	Suppose agent $i$ finds a neighbor, say agent $\ell$, at time $k_\ell > k_j$. Agent~$i$ can then find the distance between~$\mb{x}_{k_j}^{j\ast}$ and~$\mb{x}_{k_\ell}^{\ell\ast}$ as follows:
	\begin{eqnarray}\label{14}
	\widetilde{d}_{k_\ell}^{j\ell}=\sqrt{
		(\widetilde{d}_{k_\ell}^{ij})^{2}+ 
		(	\textcolor{black}{{\widetilde{d}_{k_\ell}^{i\ell}}})^{2} -
		 2\widetilde{d}_{k_\ell}^{ij}{\textcolor{black}{{\widetilde{d}_{k_\ell}^{i\ell}}}}\cos\angle(\widetilde{d}_{k_\ell}^{ij},{\textcolor{black}{{\widetilde{d}_{k_\ell}^{i\ell}}}})
		},
	\end{eqnarray}
	in which $\widetilde{d}_{k_\ell}^{ij}$ is the distance between~$\mb{x}_{k_\ell}^{i\ast}$ and~$\mb{x}_{k_j}^{j\ast}$, and~$\angle(\widetilde{d}_{k_\ell}^{ij},{\textcolor{black}{{\widetilde{d}_{k_\ell}^{i\ell}}}})$ is the angle between the two lines connecting~$\mb{x}_{k_\ell}^{i\ast}$ to~$\mb{x}_{k_j}^{j\ast}$ and~$\mb{x}_{k_\ell}^{\ell\ast}$.
\end{lem}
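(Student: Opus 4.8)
The plan is to apply the law of cosines in the triangle whose vertices are the true positions $\mb{x}_{k_\ell}^{i\ast}$, $\mb{x}_{k_j}^{j\ast}$, and $\mb{x}_{k_\ell}^{\ell\ast}$. The side opposite to the vertex at $\mb{x}_{k_\ell}^{i\ast}$ is exactly the quantity we want, $\widetilde{d}_{k_\ell}^{j\ell}$, so once we know the two adjacent side lengths and the included angle, Eq.~\eqref{14} is immediate. The two adjacent sides are $\widetilde{d}_{k_\ell}^{ij}$ (the distance from agent $i$'s current true position to the virtual location $\mb{x}_{k_j}^{j\ast}$) and $\widetilde{d}_{k_\ell}^{i\ell}$ (the distance to the currently-visible neighbor $\ell$), so the only real content is to argue that agent $i$ actually possesses all three of these ingredients at time $k_\ell$.

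First I would note that $\widetilde{d}_{k_\ell}^{ij}$ is available by the distance-tracking mechanism of Section~\ref{gm1}: Lemma~\ref{lem1} gives the one-step recursion \eqref{3-1}, and iterating it from time $k_j$ up to $k_\ell$ (using the known motion parameters $d^i_{k'\to k'+1}$, $\theta^i_{k'\to k'+1}$ at each intermediate step, together with the orientation bookkeeping \eqref{2}) produces $\widetilde{d}_{k_\ell}^{ij}$ as part of the stored tuple $\{j,k_j,\widetilde{d}_k^{ij},\mb{x}_{k_j}^j\}$. Second, $\widetilde{d}_{k_\ell}^{i\ell}$ is simply the fresh range measurement to $\ell$ obtained at the moment of contact $k_\ell$, available by Assumption~\textbf{A2}. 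Third, the included angle $\angle(\widetilde{d}_{k_\ell}^{ij},\widetilde{d}_{k_\ell}^{i\ell})$ is the angle, measured at agent $i$'s camera at time $k_\ell$, between the line of sight toward the currently-visible node $\ell$ and the (virtual) line of sight toward the past location $\mb{x}_{k_j}^{j\ast}$. I would point out that agent $i$ can reconstruct the direction to $\mb{x}_{k_j}^{j\ast}$ in its own current body frame by accumulating its successive rotations $\beta^i$ and $\alpha^i$ and translations since $k_j$ — essentially the same accumulated-angle sum $\sum_{k'} \theta^i_{k'\to k'+1}$ that already appears in \eqref{1} — while the direction to $\ell$ is measured directly; subtracting the two gives the angle, all without reference to any global frame.

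With these three quantities in hand, the proof closes by writing the law of cosines for the triangle $(\mb{x}_{k_\ell}^{i\ast},\mb{x}_{k_j}^{j\ast},\mb{x}_{k_\ell}^{\ell\ast})$, which is exactly \eqref{14}; the degenerate case where the three points are collinear is subsumed (the angle is $0$ or $\pi$ and the formula still holds). A clean way to present this is a short figure-driven argument paralleling the proof of Lemma~\ref{lem1}, deferring it to an appendix in the same style as "See Appendix~\ref{pr1} for the proof."

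The main obstacle I anticipate is not the trigonometry — that is one application of the law of cosines — but rather justifying rigorously that the \emph{angle} $\angle(\widetilde{d}_{k_\ell}^{ij},\widetilde{d}_{k_\ell}^{i\ell})$ is genuinely computable by agent $i$ from purely relative (frame-free) measurements. One must carefully track how agent $i$'s body frame rotates through each $\beta^i_{k'}$ and $\alpha^i_{k'}$ between $k_j$ and $k_\ell$, and verify that the bearing to the fixed virtual point $\mb{x}_{k_j}^{j\ast}$, re-expressed in the time-$k_\ell$ body frame, is recoverable from the stored distance $\widetilde{d}_{k_j}^{ij}$ together with these recorded rotations and the step lengths $d^i_{k'\to k'+1}$ — i.e. that the distance-tracking recursion of Lemma~\ref{lem1} can be upgraded to an angle-tracking recursion. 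This is the only place where the "only relative angles are known" assumption is genuinely exercised, and it is where I would spend most of the care.
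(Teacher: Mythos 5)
Your proposal is correct and follows essentially the same route as the paper: the paper's proof (Appendix~\ref{pr2}) also reduces Eq.~\eqref{14} to one application of the law of cosines in the triangle $(\mb{x}_{k_\ell}^{i\ast},\mb{x}_{k_j}^{j\ast},\mb{x}_{k_\ell}^{\ell\ast})$, and spends its effort on exactly the point you flag --- upgrading the distance recursion of Lemma~\ref{lem1} to a bearing recursion, which it carries out explicitly via the angles $\phi_1,\dots,\phi_4$ (computing each interior angle from three known sides and combining with the recorded rotations $\alpha^i$ and $\beta^i$) on a two-step example trajectory. You correctly identified both the mechanism and where the real content lies.
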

	\vspace{-5mm}
	\begin{figure}[!h]
		\centering
		\includegraphics[width=1.8in]{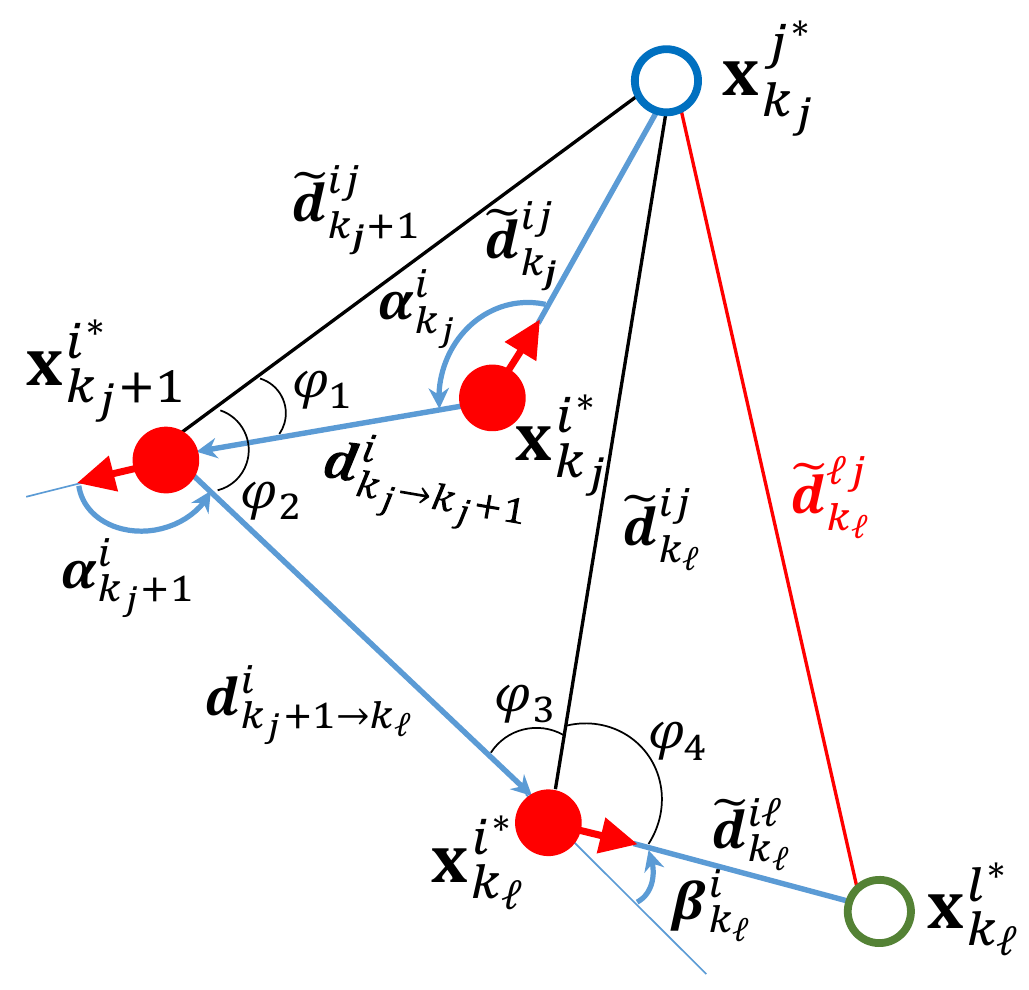}
		\caption{\textcolor{black}{At time $k_\ell$, agent $i$ finds the distance between a new neighbor, node~$\ell$, and the virtual location,~${\mb{x}}_{k_j}^{j\ast}$, where it exchanged information with node $j$ at time~$k_j <k_{\ell}$}.}
		\label{fig1-1}
	\end{figure}
\textcolor{black}{See Appendix~\ref{pr2} for the proof.}

In the next section, we use these distance tracking methods to describe the position tracking algorithm.

\section{Distributed Mobile Localization: Algorithm}\label{sec4}
Consider a network of~$N$ agents with unknown locations and~$M$ anchors, according to the motion model introduced in Section~\ref{sec2}. Let~${\mathcal{V}}_i(k)\subseteq\Theta$ be the \emph{$i$-visited set}, defined as the set of distinct nodes visited by agent,~$i\in\Omega$, up to time~$k$; and call an element in this set as \textit{$i$-visited node}. We start by introducing the notion of a virtual convex hull. 

\subsection{Virtual convex hull}\label{sec_vch}
Suppose agent~$i$ communicates with node~$j$ at time~$k_j$, and obtains the distance,~$\widetilde{d}^{ij}_{k_j}$ to~$j$, along with~$j$'s current location estimate,~$\mb{x}_{k_j}^j$, i.e.,~$j \in {\mathcal{V}}_i(k_j)$. At any time,~$k>k_j$, agents,~$i$ and~$j$, may move apart but agent~$i$ now knows\textcolor{black}{~$\widetilde{d}^{ij}_k~\forall k>k_j$}, using the geometric framework discussed in Section~\ref{gm1}. At some later time,~$k_\ell > k_j$, agent~$i$ makes contact with another node,~$\ell$, and thus obtains~$\mb{x}_{k_\ell}^\ell$ \textcolor{black}{and~$\widetilde{d}^{i\ell}_{k_\ell}$}, and keeps track of\textcolor{black}{~$\widetilde{d}^{i\ell}_k,~\forall k>k_\ell$}, thus~$j,\ell \in {\mathcal{V}}_i(k), \forall k\geq k_\ell$. Using the approach described in Section~\ref{gm2}, at time $k_\ell$, agent $i$ \textcolor{black}{also} computes $\widetilde{d}_{k_\ell}^{j\ell}$, i.e., the distance between~$\mb{x}_{k_j}^{j\ast}$ and~$\mb{x}_{k_\ell}^{\ell\ast}$.
Finally, agent~$i$ meets agent~$n$ at some~$k_n > k_\ell$, and thus now possesses the location estimates:~$\mb{x}_{k_j}^j,\mb{x}_{k_\ell}^\ell,\mb{x}_{k_n}^n$; and the distances:\textcolor{black}{~$\widetilde{d}^{iq}_k,k> k_n$}, with~$q=j,\ell,n$  (computed by \textcolor{black}{Lemma~\ref{lem1}}), along with the following distances:~$\textcolor{black}{\widetilde{d}_{k_n}^{j\ell}}$,~$\widetilde{d}_{k_n}^{nj}$, and~$\widetilde{d}_{k_n}^{n\ell}$ (computed by \textcolor{black}{Lemma~\ref{lem2}}).
At this point~$j,\ell,n \in {\mathcal{V}}_i(k), \forall k\geq k_n$, and agent~$i$ can use the $6$ distances to perform the inclusion test, described in Appendix~\ref{incl}, to check if the three visited nodes forms a virtual convex hull in which agent~$i$ lies at time~$k$.
If the test is passed, the set,~$\Theta_i(k)\triangleq\{j,\ell,n\}\subseteq\mc{V}_i(k)$, forms the virtual convex hull; otherwise, agent~$i$ continues to move \textit{and }add nodes in~$\mc{V}_i(k)$ until some combination passes the convexity test. Note that the distance between the agent and each of the virtual locations is updated every time agent~$i$ moves to a new location. However, the \textit{distance between the virtual locations} does not change unless agent~$i$ \textit{revisits} any of the previously visited nodes.

Fig.~\ref{f5-1}~(a) shows the trajectories of four agents:~$\ocircle,\Box,\pentagon,\hexagon$, over~$k=1,\ldots,9$; the time-indices are marked inside the agent symbols. 
From the perspective of agent~$\ocircle$, see Fig.~\ref{f5-1} (b): it first makes contact (communicates) with agent~$\square$, at time~$k_{\square}=2$, and then they both move apart; next, it makes contact with agents,~$\pentagon$ at~$k_{\pentagon}=4$, and~$\hexagon$ at~$k_{\hexagon}=6$. We have~$\mc{V}_{\ocircle}(2)=\{\Box\}$,~$\mc{V}_{\ocircle}(4)=\{\Box,\pentagon\}$, and~$\mc{V}_{\ocircle}(6)=\{\Box,\pentagon,\hexagon\}$, where a non-trivial convex hull becomes available at~$k=6$. However, agent~$\ocircle$ does not lie in the corresponding convex hull,~$\mc{C}(V_{\ocircle}(6))$, and cannot update its location estimate with the past neighboring estimates:~$\mb{x}^{\square}_{k_{\square}},\mb{x}^{\pentagon}_{k_{\pentagon}},\mb{x}^{\hexagon}_{k_{\hexagon}}$. At this point, agent~$\ocircle$ must wait until it \emph{either} moves inside the convex hull of~$\Box,\pentagon,\hexagon$, \emph{or} finds another agent with which the \textcolor{black}{convexity condition} is satisfied. 
Fig.~\ref{f5-1} illustrates this process in four frames.
The former is shown in Fig.~\ref{f5-1} (d), where agent~$\ocircle$ has moved inside~$\mc{C}(V_{\ocircle}(6))$ at some later time,~$k=9$; we have~$\Theta_{\ocircle}(9)=\{\Box,\pentagon,\hexagon\}$. The notion of a \emph{virtual} convex hull is evident from this discussion: an agent may only communicate with at most one agent at any given time; when the convexity condition is satisfied eventually, the updating agent may not be in communication with the corresponding nodes. Once~$\Theta_{\ocircle}(k)$ is successfully formed, agent~${\ocircle}$ updates its location \emph{linearly} using the barycentric representation in Eq.~\eqref{Eq5}, where the coefficients are computed using the distance equations in Lemmas~\ref{lem1},~\ref{lem2}, and the Cayley-Menger determinant to compute areas. After this update, agent~${\ocircle}$ removes~$\Theta_{\ocircle}(k)$ from~$\mc{V}_i({\ocircle})$, as the location estimates of the nodes in~$\Theta_{\ocircle}(k)$ have been consumed\footnote{We choose this simple strategy to remove information from~$\mc{V}_i(k)$ for convenience. Another candidate strategy is to use a \textit{forgetting factor}, which chooses the past used nodes less frequently.}. 
\begin{figure*}
	\centering
	\subfigure{\includegraphics[width=1.72in]{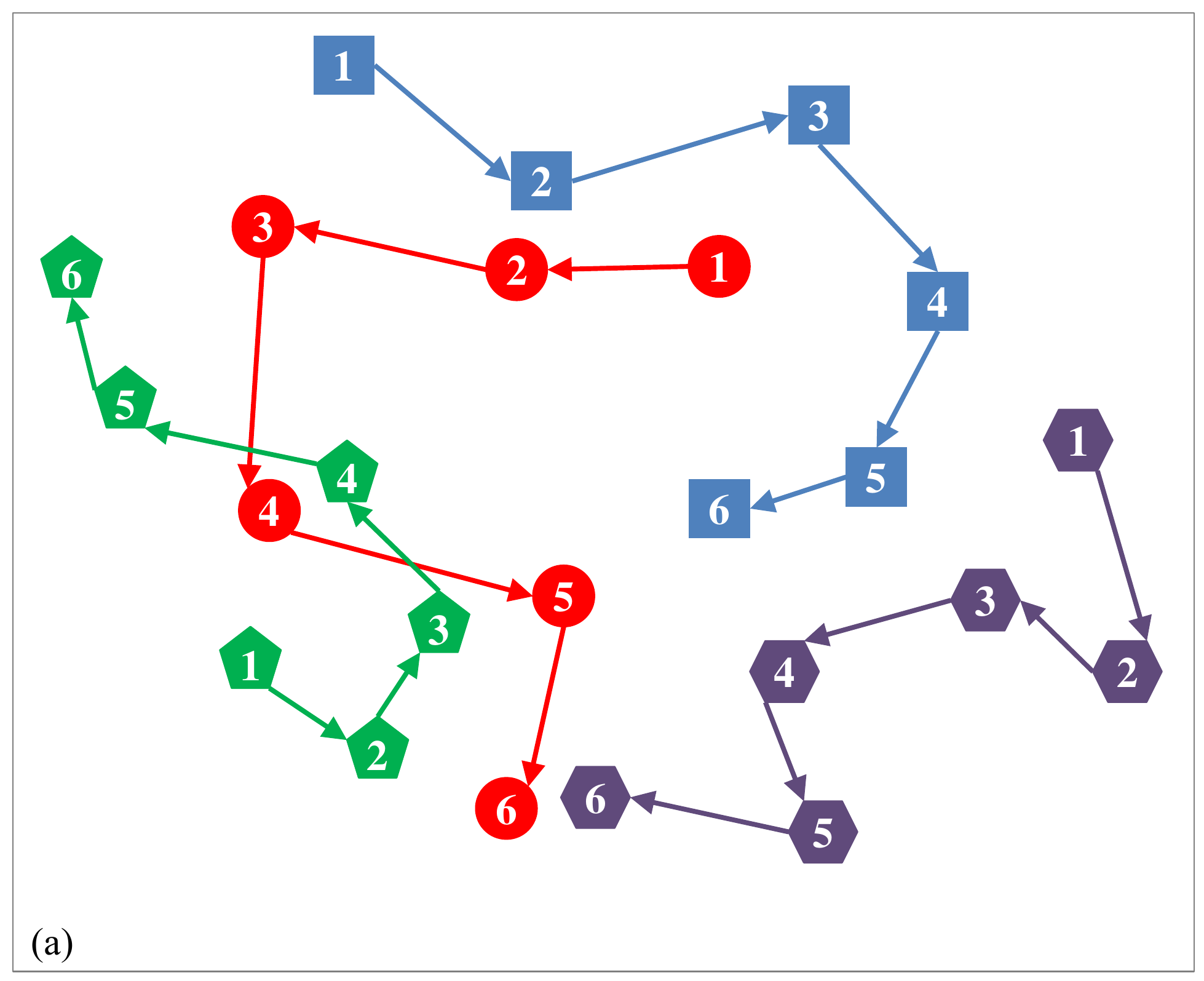}}
	\subfigure{\includegraphics[width=1.72in]{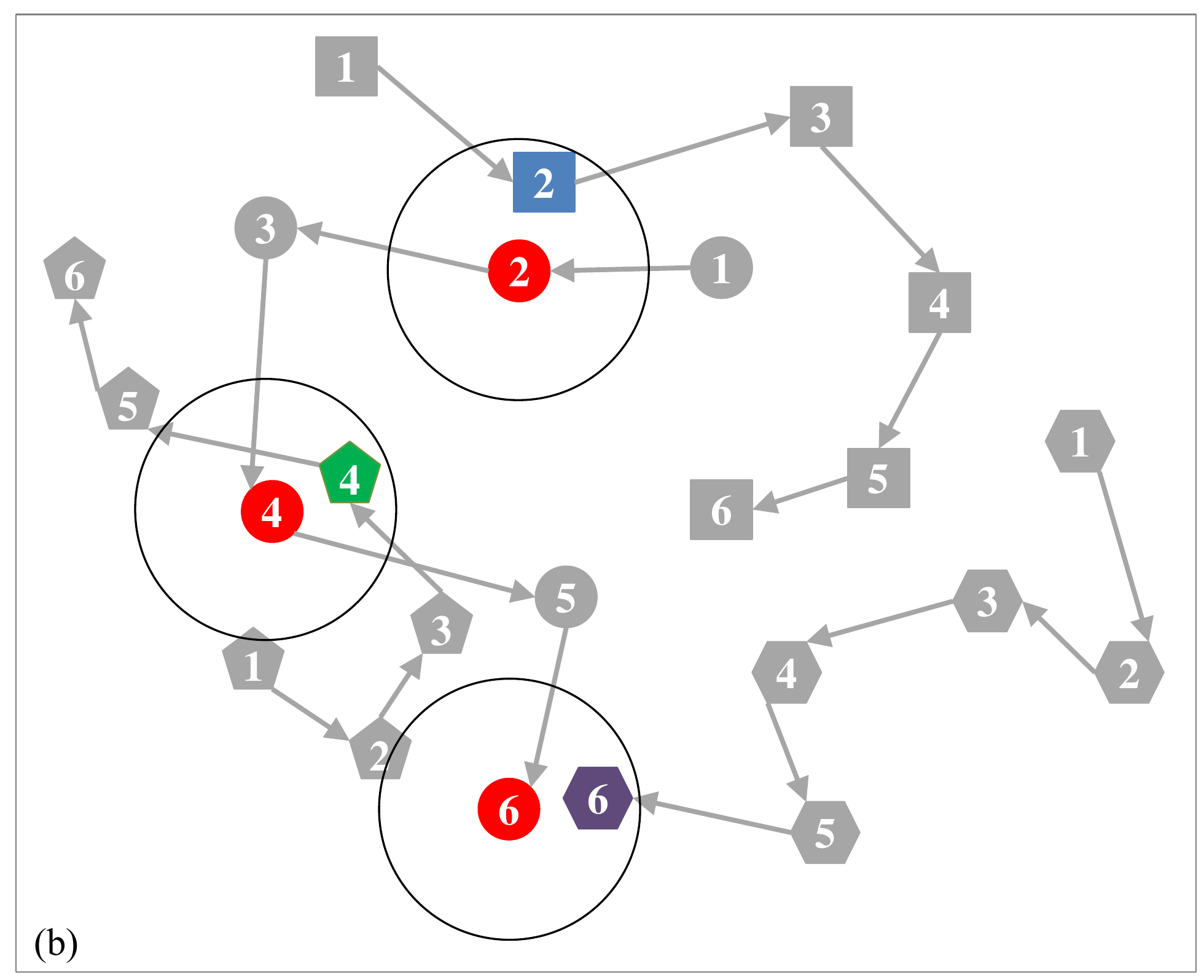}}
	\subfigure{\includegraphics[width=1.72in]{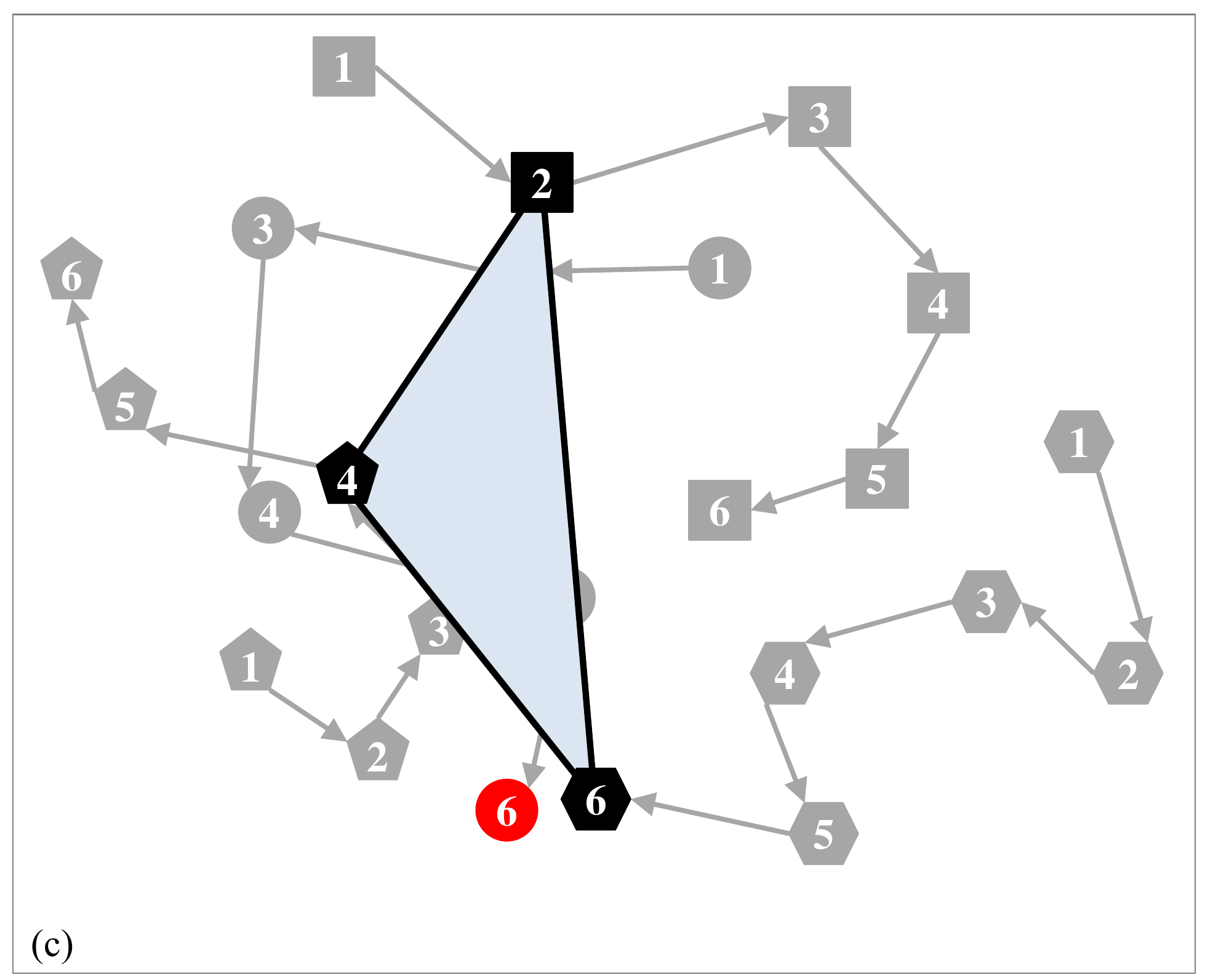}}
	\subfigure{\includegraphics[width=1.72in]{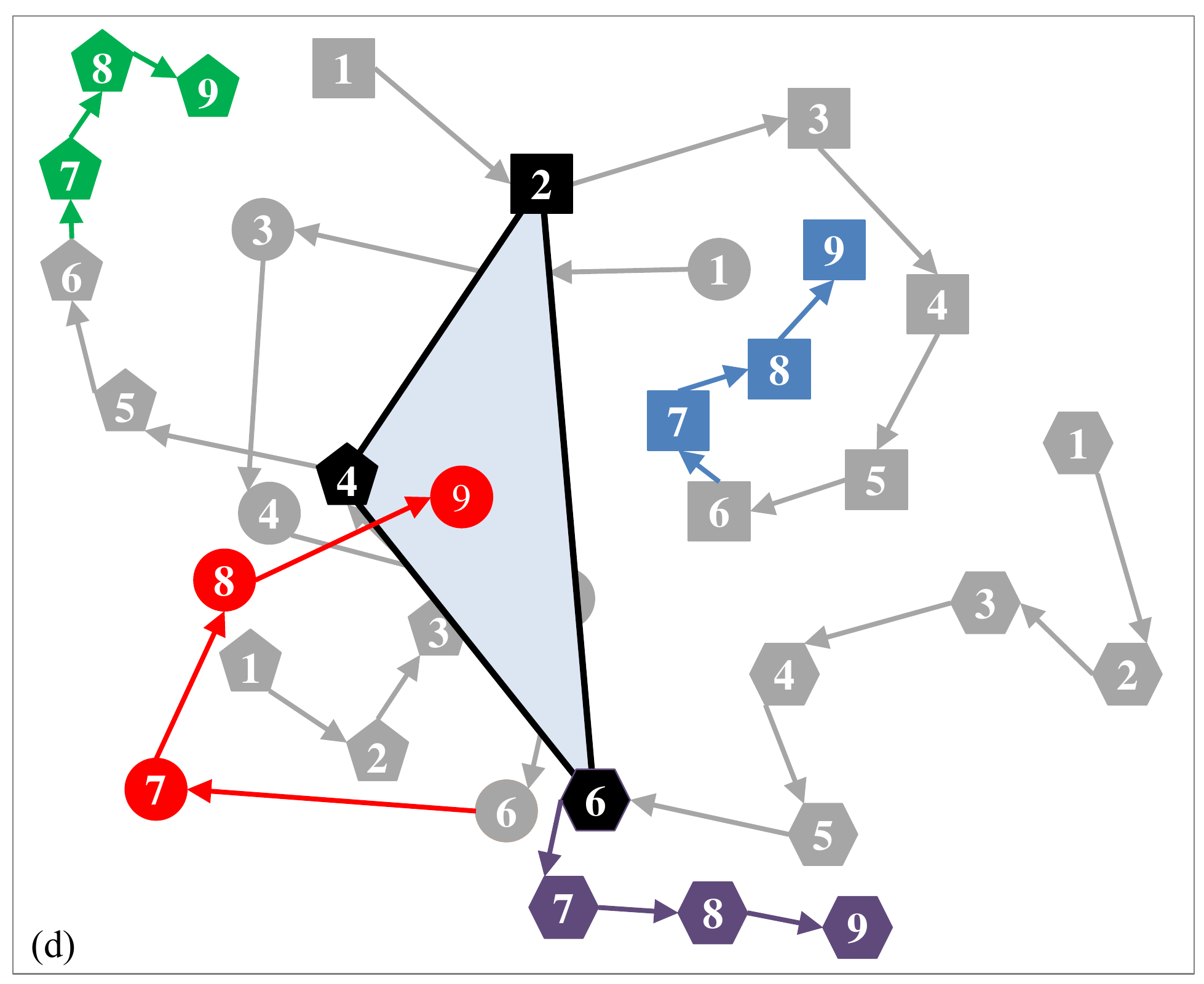}}
	\caption{Virtual convex hull with four agents:~$\ocircle,\Box,\pentagon,\hexagon$; with respect to agent~$\ocircle$: (a) Agent trajectories and time-indices;
		(b)~$\ocircle\leftrightarrow\square$ at~$k_\square=4$,~$\ocircle\leftrightarrow\pentagon$ at~$k_{\pentagon}=4$,~$\ocircle\leftrightarrow\hexagon$ at~$k_{\hexagon}=6$; circles indicate communication radius of agent~$\ocircle$; (c) Virtual convex hull of agents,~$\Box,\pentagon,\hexagon$, available at agent~$\ocircle$ at~$k=6$; (d) Trajectories at~$k>6$, test passed at~$k=9$.}
	\label{f5-1}
\end{figure*}

The following result will be useful in the sequel.
\begin{lem}\label{lem4}
For each~$i\in\Omega$, there exists a set,~$\Theta_i(k)\subseteq\mc{V}_i(k)$, such that\textcolor{black}{~$|\Theta_i(k)|=3$} and~$i\in\mc{C}(\Theta_i(k))$, for infinitely many~$k$'s.
\end{lem}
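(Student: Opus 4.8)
The plan is to exploit the boundedness of the region together with the fact that every agent is moving perpetually, so that each agent keeps meeting nodes and keeps enlarging (or refreshing) its visited set infinitely often; I then need to argue that, among the configurations of visited locations, a valid virtual triangle containing the agent occurs infinitely often. First I would set up the bookkeeping: fix an agent $i\in\Omega$. By Assumption {\bf A2} and the perpetual-motion model of Section~\ref{sec2}, agent $i$ makes contact with some node infinitely often (it never stops, the region is bounded, and the communication radius $r$ is fixed), so $\mc{V}_i(k)$ is updated at infinitely many time instants $k_1<k_2<\cdots$. At each such instant agent $i$ records a virtual location $\mb{x}_{k_m}^{q_m\ast}$ (via Lemmas~\ref{lem1} and~\ref{lem2} it knows all mutual distances needed to run the inclusion test of Appendix~\ref{incl}). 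Thus the question reduces to: among the sequence of triples of recorded virtual locations available to $i$, does $i$ lie inside $\mc{C}(\{\text{three of them}\})$ infinitely often?

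Next I would make a geometric compactness argument. All true locations live in a fixed bounded region $\mc{D}\subset\mbb{R}^2$. Partition $\mc{D}$ into finitely many small cells so that any three virtual locations lying in three suitably separated cells form a ``fat'' triangle (area bounded below) whose union with a fourth cell — the one currently containing agent $i$ — has the containment property. Since agent $i$ visits infinitely many nodes and those nodes' true locations are spread over $\mc{D}$, by pigeonhole agent $i$ eventually accumulates recorded virtual locations in a collection of cells that do admit such a fat triangle around $i$'s current cell; and because $i$ itself keeps moving through $\mc{D}$, it is in ``the right'' cell relative to three of its recorded points infinitely often. Each time this happens, the set $\Theta_i(k)=\{q,\ell,n\}\subseteq\mc{V}_i(k)$ with $|\Theta_i(k)|=3$ and $i\in\mc{C}(\Theta_i(k))$ exists, which is exactly the claim. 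The removal of $\Theta_i(k)$ from $\mc{V}_i(k)$ after an update (or the forgetting-factor variant) does not break the argument, since fresh nodes keep arriving and the pigeonhole is applied to the tail of the visited sequence.

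The main obstacle is the second step: ruling out degenerate limiting behavior. A priori the motion is adversarial (the angles $\alpha_k^i,\beta_k^i$ are chosen non-deterministically), so one must exclude scenarios where agent $i$'s recorded virtual locations asymptotically collapse onto a line, or onto a single point, or where $i$ persistently stays outside every triangle its points can form. This is where I expect to need an explicit assumption — either a mild non-degeneracy/persistence-of-excitation condition on the motion and on which nodes are met, or a connectivity-type assumption analogous to the ``convexity in deployment and motion'' alluded to in the abstract — to guarantee that fat triangles around $i$ recur. Everything else (perpetual contact from boundedness plus fixed $r$, the availability of all six pairwise distances from Lemmas~\ref{lem1}--\ref{lem2}, correctness of the inclusion test, and the pigeonhole over finitely many cells) is routine once that non-degeneracy is in hand. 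I would therefore structure the proof as: (i) infinitely many contacts; (ii) finite-cell partition and fat-triangle lemma; (iii) pigeonhole giving recurrence of a containing triple; (iv) note robustness to the node-removal step.
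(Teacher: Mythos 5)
There is a genuine gap, and you have in fact located it yourself: your argument cannot be closed without an assumption you do not state. Two of your steps are not deterministic consequences of the model. First, step (i) --- ``agent $i$ makes contact with some node infinitely often because it never stops, the region is bounded, and $r$ is fixed'' --- is false as a deterministic claim: two agents can move forever in a bounded region without ever coming within distance $r$ of one another (e.g., confined to disjoint sub-regions, or always on opposite sides). Second, and as you concede, the pigeonhole/fat-triangle step fails against adversarial motion: the recorded virtual locations could collapse toward a line, or agent $i$ could persistently avoid every triangle they span. Since the lemma asserts recurrence for \emph{infinitely many} $k$, both failure modes are fatal, and a finite-cell partition does not by itself rule them out.

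The paper closes exactly this gap by making the argument probabilistic rather than combinatorial. It assumes (in a footnote) that agents are uniformly distributed over the $n\times n$ region at each time --- or, more weakly, that each contact event has nonzero probability --- so that $\mathbb{P}(j\in\mc{V}_i(k_1))=\pi r^2/n^2>0$ for each prospective neighbor, the three recorded virtual locations are almost surely non-collinear (a point landing exactly on a previously drawn line has probability zero), and the containment event $i\in\mc{C}(\Theta_i(k))$ then has probability equal to the triangle's area divided by $n^2$, which is positive. Recurrence of positive-probability events over infinitely many trials gives the claim. So the ``explicit non-degeneracy/persistence-of-excitation condition'' you said you would need is precisely this stochastic assumption; without importing it (or something equivalent), your deterministic pigeonhole route does not yield the lemma. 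If you adopt it, your cell-partition machinery becomes unnecessary --- the positive-probability computation already delivers both the infinitude of contacts and the recurrence of a containing triple.
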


\textcolor{black}{See Appendix~\ref{pr3} for the proof.}

\subsection{Algorithm}
We now describe the localization algorithm\footnote{\textcolor{black}{Main steps of the algorithm are summarized in Appendix~\ref{psuedo}}.} in this case according to the number,~$\vert{\mathcal{V}_i(k)}\vert$, of~$i$-visited nodes in the $i$-visited set,~$\mathcal{V}_i(k)$. There are two different update scenarios for any arbitrary agent,~$i$:
\begin{enumerate}[(i)]
\item $0\leq\vert{\mathcal{V}}_i(k)\vert < 3$: Agent,~$i$, does not update its current location estimate.

\item $\vert{\mathcal{V}}_i(k)\vert \geq 3$ : Agent,~$i$, performs the inclusion test; if the test is passed the location update is applied. 
\end{enumerate}
Using the above, consider the following update:
\begin{eqnarray}\label{18}
\mb{x}^i_{k+1} = \alpha_k\mb{x}_k^i + (1-\alpha_k) \sum_{j\in\Theta_i(k)}a_k^{ij}\mb{x}_k^j + \widetilde{\mb{x}}_{k+1}^i,
\end{eqnarray}
where~$\mb{x}^i_{{k}}$ is the vector of the~$i$-th agent's coordinates at time~$k$,~$\widetilde{{\bf{x}}}^i_{{k+1}}$ is the motion vector,~$a_k^{ij}$ is the barycentric coordinate of node~$i$ with respect to the nodes~$j\in\Theta_i(k)$, and~$\alpha_k$ is such that
\begin{eqnarray}\label{alpk}
\alpha_k=\left\{
\begin{array}{ll}
1, & \forall k~|~\Theta_i(k)=\emptyset,\\
\in\left[\beta,1\right), & \forall k~|~\Theta_i(k)\neq\emptyset,
\end{array}
\right.
\end{eqnarray}
\textcolor{black}{where $\beta$ is a design parameter and~$\Theta_i(k)$ is a virtual convex hull}. \textcolor{black}{As we explain later, an updating agent receives the valuable location information only if it updates with respect to an anchor, or another agent that has previously communicated with an anchor. The non-zero self-weights assigned to the previous state of the updating agent guarantees that the agent does not completely forget the valuable information after receiving them, e.g., by performing an update where none of the agents in the triangulation set has previously received anchor information}. Note that~$\Theta_i(k)=\emptyset$ does not necessarily imply that agent~$i$ has no neighbors at time~$k$, but only that no set of neighbors meet the (virtual) convexity.
The above algorithm can be written in matrix form as
\begin{eqnarray}\label{eq1}
{\bf{x}}_{{k+1}}={\bf{P}}_{{k}}{{\bf{x}}_{{k}}}+{\bf{B}}_{{k}}{\bf{u}}_{k}+\widetilde{{\bf{x}}}_{{k+1}},\qquad k>0,
\end{eqnarray}
where~\textcolor{black}{${\bf{x}}_{{k}}$ is the vector of agent coordinates evaluated at time~$k$,~${\bf{u}}_{{k}}$ is
the vector of anchor coordinates at time $k$}, and~$\widetilde{{\bf{x}}}_{{k+1}}$ is the change in the location of agents at the beginning of the~$k-$th iteration according to the motion model. Also~${{\bf{P}}_k}$ and~${\bf{B}}_k$ are the system matrix and the input matrix of the above LTV system. 
We denote the~$(i,j)$-th element of the matrices,~${\bf{P}}_{{k}}$ and~${\bf{B}}_{{k}}$, as $({\bf{P}}_k)_{i,j}$ and~${({\bf{B}}_k)_{i,j}}$, respectively.
We can now rewrite Eq.~\eqref{18} as
\begin{eqnarray}\label{20}
\mb{x}^i_{k+1} &=& \alpha_k\mb{x}_k^i + (1-\alpha_k) \left(\sum_{j\in\Theta_i(k)\cap\Omega}\textcolor{black}{{a}_k^{ij}}\mb{x}_k^j\right),\nonumber\\ &+&(1-\alpha_k) \left(\sum_{m\in\Theta_i(k)\cap\kappa}\textcolor{black}{{a}_k^{im}}\mb{u}_k^m\right)+\widetilde{\mb{x}}_{k+1}^i,\qquad\label{24}
\end{eqnarray}
where \textcolor{black}{$\mb{u}_k^m$ denotes the known coordinates of the $m$-th anchor at time $k$}. \textcolor{black}{Thus,~$({\bf{P}}_k)_{i,i}=\alpha_k$ as defined in Eq.~\eqref{alpk}, and
\begin{eqnarray}
({\bf{P}}_k)_{i,j}&=& (1-\alpha_k)a_k^{ij} ~~\mbox{if}~~j \in \Theta_i(k)\cap\Omega,\nonumber\\
({\bf{B}}_k)_{i,m}&=& (1-\alpha_k)a_k^{im} ~~\mbox{if}~~m \in \Theta_i(k)\cap\kappa,
\end{eqnarray}}\textcolor{black}{If the triangulation set for agent $i$ at time $k$ does not contain the $j$-th agent or the $m$-th anchor, the corresponding elements in system and input matrices,~$({\bf{P}}_k)_{i,j}$ and~$({\bf{B}}_k)_{i,m}$ are zero.}
Note that Eq.~\eqref{alpk} immediately implies that the self-weight at each agent is always lower bounded, i.e.,
\begin{eqnarray}\label{B0eq}
0<\beta\leq ({\bf{P}}_k)_{i,i}\leq1, \forall k, i. 
\end{eqnarray} 
Since accurate information is only injected via the anchors, it is reasonable (and necessary) to set a lower bound on the weights assigned to the anchor states. In particular, we make the following assumption. 

{\bf A3: Anchor contribution}. For any update that involves an anchor, i.e., for any~$({\bf{B}}_k)_{i,m} \neq 0$, we assume that
\begin{eqnarray}\label{23}
0 < \alpha \leq ({\bf{B}}_k)_{i,m},\qquad \forall k, i\in\Omega, {m\in\Theta_i(k)\cap\kappa},
\end{eqnarray}
where $\alpha$ is the minimum anchor contribution, \textcolor{black}{see Section~\ref{ca}.}}}
Assumption~{\bf{A3}} implies that if there is an anchor in the (virtual) convex hull, \textcolor{black}{it always contributes a certain amount of information}. In other words, the updating agent has to lie in an \textit{appropriate} position inside the convex hull. Assumption {\bf{A3}} states that the weight assigned to the anchor (which comes from barycentric coordinates) should be at least~$\alpha$. Therefore, the area (in $\mathbb{R}^{2}$) of the triangle corresponding to the anchor must take an adequate portion of the area of the \textcolor{black}{whole convex hull triangle}. This is illustrated in~Fig.~\ref{figg}, where the updating agent lies inside a virtual convex hull, consisting of an anchor (node $j$), and two other \textcolor{black}{agents,~$m$ and $\ell$}. Node $i$ \textcolor{black}{has} communicated with the anchor and the two other agents, at time instants,~$k_j$,~$k_m$, and~$k_\ell$, respectively. 

\begin{figure}[!h]
	\centering
	\includegraphics[height=1.25in]{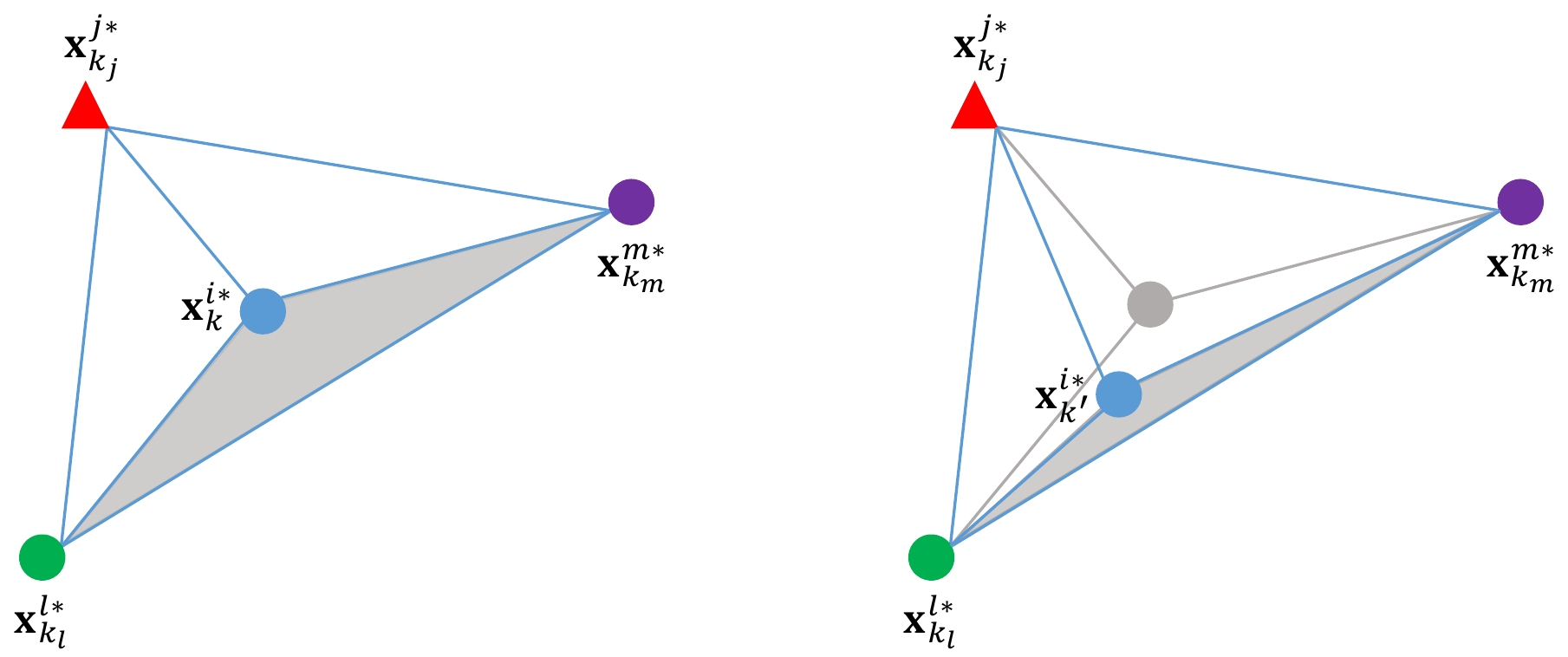}
	\caption{(Left) \textcolor{black}{At time $k$}, agent $i$ is located on the threshold boundaries, which assigns the minimum weight, $\alpha$, to the anchor. (Right) \textcolor{black}{At time $k^{\prime}$} agent $i$ is located in an inappropriate location inside the convex hull.}
	\label{figg}
\end{figure}
\noindent Fig.~\ref{figg} (Left) illustrates the position of agent $i$ at time $k$,~${\bf{x}}^{i}_{k}$ \textcolor{black}{on the} threshold boundaries, such that
\begin{eqnarray}\label{37}
({\bf{B}}_k)_{i,j} = \frac{A_{\Theta_i(k)\cup\{i\}\setminus j}}{A_{\Theta_i(k)}}=\alpha.
\end{eqnarray}
Fig.~\ref{figg} (Right) on the other hand shows that if the agent lies in \textcolor{black}{${\bf{x}}^{i\ast}_{k^\prime}$} (or any other position within the triangle of ${\bf{x}}^{i\ast}_{k}$, ${\bf{x}}^{l\ast}_{k_l}$, and ${\bf{x}}^{m\ast}_{k_m}$), the left hand side of Eq.~\eqref{37} becomes less than $\alpha$, and Eq.~\eqref{23} does not hold. Since the corresponding update does not provide enough valuable information for agent $i$, no update occurs in this case.

With the lower bounds on both the self-weights, Eq.~\eqref{B0eq}, and the anchor weights, Eq.~\eqref{23}, we  note that at time~$k$, the matrix of barycentric coordinates with respect to agents with unknown locations, i.e., the system matrix~${\bf{P}}_{k}$, is either
\begin{enumerate}[(i)]
\item \emph{identity}, when no update occurs; or,
\item \emph{identity except a stochastic~$i$-th row}, when there is no anchor in the virtual convex hull, i.e.,~$\Theta_i(k)\cap\kappa=\emptyset$; or, 
\item \emph{identity except a strictly sub-stochastic~$i$-th row}, when there is at least one anchor in the virtual convex hull\textcolor{black}{\footnote{\textcolor{black}{Note that if an agent lies inside a virtual convex hull of three anchors, then practically by assigning a zero weight on its past, the agent can find its exact location and take the role of an anchor, which could subsequently increase the convergence rate of the algorithm.}}}.
\end{enumerate}

In the next section, we provide sufficient conditions under which the iterative localization algorithm, Eq.~\eqref{eq1}, tracks the true agent locations. Before we proceed, let us make the following definitions to clarify what we mean by (strictly sub-) stochasticity throughout this~paper:
\begin{definition}
	A non-negative, stochastic matrix is such that all of its row sums are one. A non-negative, strictly sub-stochastic matrix is such that it has at least one row that sums to strictly less than one and every other row sums to at most one.
\end{definition}

\section{Distributed Mobile Localization: Analysis}\label{sec5}
In this section, we address the challenge on the analysis of LTV systems with potentially non-deterministic system matrices. We borrow the following result on the asymptotic stability of LTV systems from~\cite{DBLP:journals/corr/SafaviK14}.

\subsection{Asymptotic stability of LTV systems}\label{ltv}
Consider an LTV system:~$\mb{x}_{k+1}=\mb{P}_k\mb{x}_k$ such that the system matrix,~$\mb{P}_k$, is time-varying and non-deterministic. The system matrix,~$\mb{P}_k$, represents at most one state update, say the~$i$-th state, for any~$k$, i.e., at most one row,~$i$, of~$\mb{P}_k$ is different from identity and can be either stochastic or strictly sub-stochastic, not necessarily in any order and with arbitrary elements as long as the bounds in \textcolor{black}{\textbf{(A3)}, \textbf{(B0), and (B1)}} in the following are satisfied: 

{\bf B0}: If the updating row,~$i$, in~$\mb{P}_k$ sums to~$1$, then
\begin{eqnarray}\label{bnd1}
0 < \beta_1 \leq (\mb{P}_{k})_{i,i},\qquad\beta_1\in\mbb{R},
\end{eqnarray}

{\bf B1}: If the updating row,~$i$, in~$\mb{P}_k$ does not sum to~$1$, then
\begin{eqnarray}\label{bnd2}
\sum_j({\bf{P}}_{k})_{i,j} \leq\beta_2 < 1,\qquad\beta_2\in\mbb{R}.
\end{eqnarray}

To analyze the asymptotic behavior of an LTV system with such system matrices, we \textcolor{black}{utilize} the notion of a slice,~${\bf{M}}_j$, which is the smallest product of consecutive system matrices,
such that the entire chain of systems matrices is covered by non-overlapping slices,~i.e., $\prod_t {\bf{M}}_t = \prod_{k}{\bf{P}}_k,$ \emph{and} each slice has a subunit infinity norm (maximum row sum), i.e.,~$\|{\bf{M}}_t\|_\infty<1,\forall t$.

\begin{figure}[!h]
	\centering
	\includegraphics[height=1.75in]{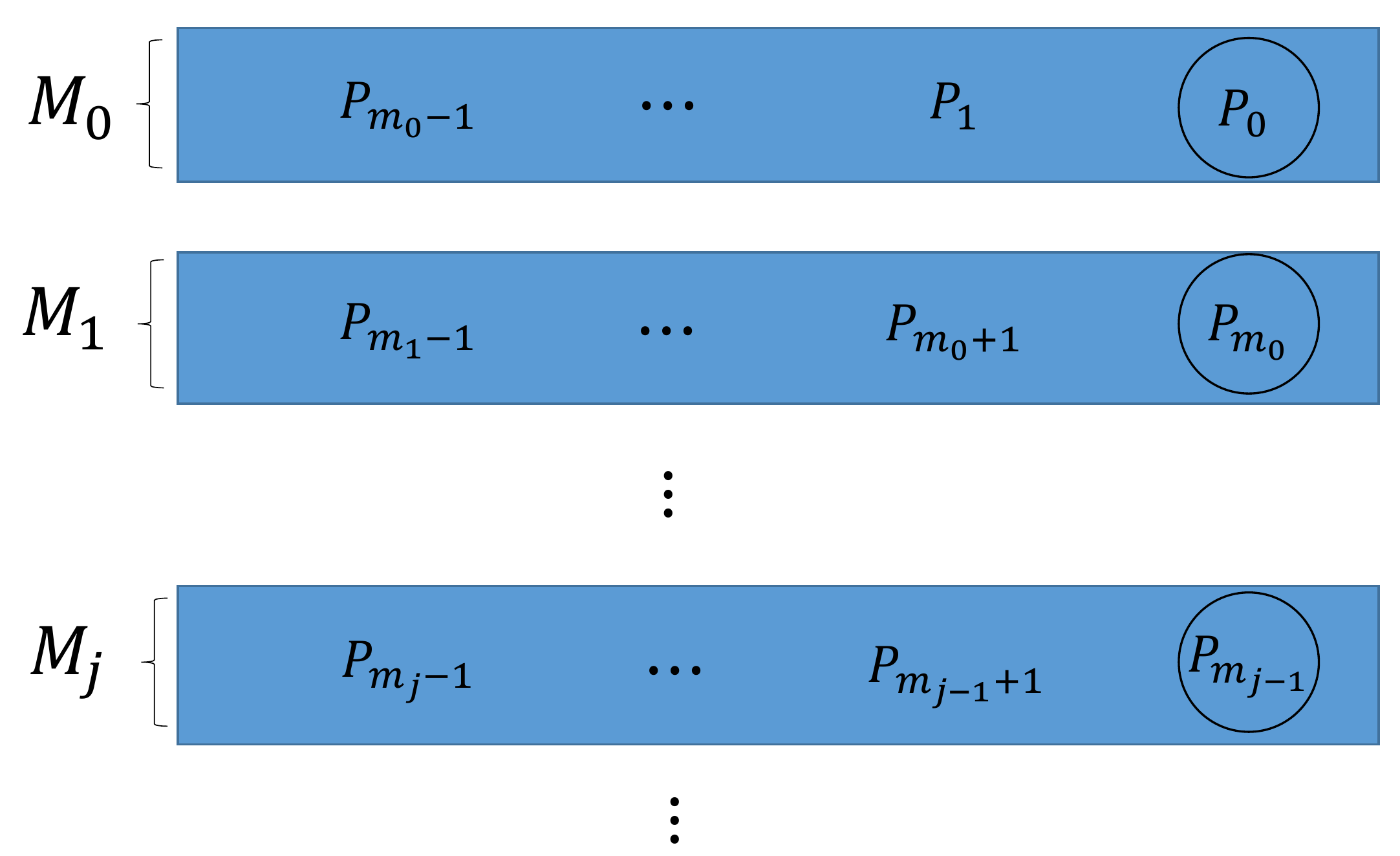}
	\caption{Slice representation.}
	\label{figs}
\end{figure}
Slices are initiated by strictly sub-stochastic system matrices, and terminated after all row sums are strictly less than one. \textcolor{black}{Slice representation is depicted in Fig.~\ref{figs}, where the rightmost system matrices (encircled in Fig.~\ref{figs}) of each slice, i.e.,~$\mb{P}_0,~\mb{P}_{m_0},~\ldots,~\mb{P}_{m_{j-1}} \ldots$, are strictly sub-stochastic. The length of a slice is defined as the number of matrices forming the slice, and for the~$j$-th slice length we have $\vert {\bf{M}}_{j} \vert = m_j - m_{j-1}, m_{-1}=0.$} Ref.~\cite{DBLP:journals/corr/SafaviK14} also shows that the upper bound on the infinity norm of a slice is further related to the length of the slice, i.e., the number of matrices forming the slice. The following theorem characterizes the asymptotic stability of the above LTV system.

\begin{thm}\label{thm0}
With Assumptions,~{\bf{B0-B1}}, the system,~$\mb{x}_{k+1}=\mb{P}_k\mb{x}_k$, converges to zero, i.e.,~$\lim_{k\ra\infty}\mb{x}_k = \mb{0}_N$,
%
%
if for every~$i \in \mathbb{N}$, there exists \textcolor{black}{a subset,~$J$,} of slices such that
\begin{equation}\label{growth}
\exists {\bf{M}}_j \in \textcolor{black}{J}:~~\vert {\bf{M}}_j \vert \leq \frac{1}{\ln\left({\beta_1}\right)}\ln\left(\frac{1 - e^{(-\gamma_2i^{-\gamma_1})}}{1-\beta_2}\right)+1,
\end{equation}
 for some $\gamma_1 \in [0,1],~\gamma_2 >0$. For any other slice,~\textcolor{black}{$j\notin J$} we have $|{\bf{M}}_j|<\infty$.
\end{thm}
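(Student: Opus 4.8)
The plan is to bound the norm of the full product $\prod_{k}\mb{P}_k$ by the product of the norms of the individual slices, and then show that the slice norms can be driven so that their infinite product is $0$. First I would recall the elementary fact that for any finite product of matrices, $\|\prod_t \mb{M}_t\|_\infty \le \prod_t \|\mb{M}_t\|_\infty$, so it suffices to show $\prod_t \|\mb{M}_t\|_\infty \to 0$ as the number of slices goes to infinity; since slices are always finite ($|\mb{M}_j|<\infty$ for every $j$), there are infinitely many of them and this limit is meaningful. The key quantitative input, borrowed from~\cite{DBLP:journals/corr/SafaviK14}, is that a slice of length $L$ which is initiated by a strictly sub-stochastic matrix (so one row sums to at most $\beta_2<1$ by {\bf B1}) and whose remaining updating rows satisfy the self-weight bound $\beta_1$ from {\bf B0}, obeys an estimate of the form $\|\mb{M}_j\|_\infty \le 1 - (1-\beta_2)\beta_1^{\,|\mb{M}_j|-1}$. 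The intuition is that the sub-stochastic "deficiency" $1-\beta_2$ introduced by the first matrix of the slice gets propagated through the slice, but each subsequent update dilutes it by at most a factor $\beta_1$ (the self-weight lower bound guarantees it is not killed entirely), so after $|\mb{M}_j|-1$ further steps the worst-case row sum is still at most $1-(1-\beta_2)\beta_1^{\,|\mb{M}_j|-1}$.

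Next I would translate the growth condition~\eqref{growth} into a lower bound on this per-slice contraction. The hypothesis says that for each index $i\in\mbb{N}$ there is a slice $\mb{M}_j$ in the distinguished subset $J$ whose length satisfies
\begin{equation*}
|\mb{M}_j| \le \frac{1}{\ln\beta_1}\,\ln\!\left(\frac{1-e^{(-\gamma_2 i^{-\gamma_1})}}{1-\beta_2}\right)+1 .
\end{equation*}
Since $0<\beta_1<1$, we have $\ln\beta_1<0$, so this inequality is equivalent (after rearranging and exponentiating) to
\begin{equation*}
\beta_1^{\,|\mb{M}_j|-1} \ge \frac{1-e^{(-\gamma_2 i^{-\gamma_1})}}{1-\beta_2},
\qquad\text{i.e.,}\qquad
(1-\beta_2)\,\beta_1^{\,|\mb{M}_j|-1} \ge 1-e^{(-\gamma_2 i^{-\gamma_1})} .
\end{equation*}
Combining with the slice-norm estimate gives $\|\mb{M}_j\|_\infty \le 1-(1-\beta_2)\beta_1^{\,|\mb{M}_j|-1} \le e^{(-\gamma_2 i^{-\gamma_1})}$ for that particular slice. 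Thus, enumerating the members of $J$ by $i=1,2,3,\dots$, we obtain a subsequence of slices $\mb{M}_{j(i)}$ with $\|\mb{M}_{j(i)}\|_\infty \le e^{-\gamma_2 i^{-\gamma_1}}$, while every other slice contributes a factor $\le 1$ (since every slice has subunit infinity norm by the definition of a slice).

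Finally I would close the argument by showing the product over $J$ already diverges to $0$:
\begin{equation*}
\prod_{i\ge 1}\|\mb{M}_{j(i)}\|_\infty
\;\le\;\prod_{i\ge 1} e^{-\gamma_2 i^{-\gamma_1}}
\;=\;\exp\!\left(-\gamma_2\sum_{i\ge 1} i^{-\gamma_1}\right)\;=\;0,
\end{equation*}
because $\gamma_1\in[0,1]$ makes $\sum_i i^{-\gamma_1}$ a divergent series (for $\gamma_1=1$ it is the harmonic series; for $\gamma_1<1$ it diverges even faster), and $\gamma_2>0$. Since the remaining slice norms are bounded by $1$, the full product $\prod_t\|\mb{M}_t\|_\infty$ is also $0$, hence $\|\prod_k \mb{P}_k\|_\infty\to 0$ and $\lim_{k\to\infty}\mb{x}_k=\mb{0}_N$ for every initial condition. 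The main obstacle I anticipate is not the summation step (which is routine) but establishing the per-slice norm bound $\|\mb{M}_j\|_\infty \le 1-(1-\beta_2)\beta_1^{\,|\mb{M}_j|-1}$ with the correct exponent: one must carefully track how the sub-stochastic deficiency of the initiating matrix propagates when later matrices in the slice update \emph{different} rows, using that each updating row keeps a self-weight $\ge\beta_1$ on a coordinate that already carries the deficiency, and that stochastic (non-initiating) updates never increase any row sum above $1$. Handling the bookkeeping of "which row currently carries how much deficiency" across an arbitrary (non-deterministic) ordering of updates is the delicate part, though it is exactly the content imported from~\cite{DBLP:journals/corr/SafaviK14}, so in this paper it can be invoked rather than reproved.
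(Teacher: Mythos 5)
Your proposal is correct and follows essentially the same route as the paper, which itself defers the full argument to~\cite{DBLP:journals/corr/SafaviK14} and only sketches the intuition: bound each slice's infinity norm via the deficiency-propagation estimate $\|{\bf{M}}_j\|_\infty \le 1-(1-\beta_2)\beta_1^{|{\bf{M}}_j|-1}$ (of which Eq.~\eqref{growth} is exactly the inversion), then use submultiplicativity and the divergence of $\sum_i i^{-\gamma_1}$ for $\gamma_1\in[0,1]$ to drive the product of slice norms to zero. Your identification of the per-slice bound as the one delicate ingredient to be imported from the prior work is exactly where the paper also places the burden.
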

The proof is available in our prior work,~\cite{DBLP:journals/corr/SafaviK14}. Here, we explain the intuition behind the above theorem.
%
%
%
\textcolor{black}{The asymptotic stability of the system is guaranteed if all (or and infinite subset of) slices have bounded lengths. Theorem~\ref{thm0} states that the system is asymptotically stable even in the non-trivial case,} where there exist an infinite subset of slices whose lengths are not bounded, but do not grow faster than the exponential growth in Eq.~\eqref{growth}. As detailed in~\cite{DBLP:journals/corr/SafaviK14}, if the lengths of an infinite subset of slices follow Eq.~\eqref{growth}, the infinite product of slices goes to a zero matrix, and the system is asymptotically stable. Note that in the RHS of Eq.~\eqref{growth}, the first~$\ln$ is negative; for the bound to remain meaningful, the second~$\ln$ must also be negative that requires~$\beta_2<e^{(-\gamma_2i^{-\gamma_1})}$, which holds for any value of $\beta_2\in[0,1)$ by choosing an appropriate $\gamma_2>0$. \textcolor{black}{Please see~\cite{DBLP:journals/corr/SafaviK14} for a detailed discussion on Eq.~\eqref{growth} and its parameters.} With the help of this theorem, we now analyze the convergence of Eq.~\eqref{eq1} in the following section.

\subsection{Convergence Analysis}\label{ca}
\textcolor{black}{We now adapt the above LTV results to the distributed localization setup described in Section~\ref{sec2}. \textcolor{black}{Recall that the $i$-th row of the system matrix,~$\mb{P}_k$ in Section~\ref{ltv}, collects the agent-to-agent barycentric coordinates} corresponding to agent~$i$ and its neighbors.
	We now relate the slice representation in Theorem~\ref{thm0} to the information flow in the network: \textcolor{black}{Each slice,~${\bf{M}}_j$,} is initiated with a strictly sub-stochastic update, i.e., when one agent with unknown location~\textit{directly} receives information from an anchor by having this anchor in its virtual convex hull. On the other hand, \textcolor{black}{a slice,~${\bf{M}}_j$,} is terminated after the information from the anchor(s) is propagated through the network and reaches every agent either directly or indirectly. Here, \emph{directly} means that an agent has an anchor in its virtual convex hull; while \emph{indirectly} means that an agent has a neighbor in its \textcolor{black}{virtual} convex hull, which has previously received the information  \textcolor{black}{(either directly or indirectly) from an anchor}.
	Once the anchor information reaches every agent in the network, the slice notion and Theorem~\ref{thm0} provide the conditions on the rate, Eq.~\eqref{growth}, at which this information should propagate \textcolor{black}{for convergence.}} \textcolor{black}{We proceed with the following lemma.}
	
\begin{lem}\label{lem5}
Under the conditions {\bf{A0-A3}} and no noise, the product of system matrices,~$\mb{P}_k$'s, in the LTV system, Eq.~\eqref{eq1}, converges to zero if the condition in Theorem~\ref{thm0} holds.
\end{lem}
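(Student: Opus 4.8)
The plan is to verify that the localization recursion in Eq.~\eqref{eq1}, after removing the known anchor contributions, is exactly an instance of the abstract LTV system analyzed in Theorem~\ref{thm0}, and then to invoke that theorem verbatim. First I would introduce the error vector $\mb{e}_k \triangleq \mb{x}_k - \mb{x}_k^\ast$, where $\mb{x}_k^\ast$ stacks the true agent locations. In the noiseless case, Assumption {\bf A1} gives $\widetilde{\mb{x}}_{k+1}^i$ exactly, and the true locations satisfy the barycentric identity Eq.~\eqref{Eq5} together with the motion model Eq.~\eqref{Eq1}; hence $\mb{x}_{k+1}^\ast = \mb{P}_k \mb{x}_k^\ast + \mb{B}_k \mb{u}_k + \widetilde{\mb{x}}_{k+1}$, because when an update occurs the coefficients $a_k^{ij}$ are computed (via Lemmas~\ref{lem1} and~\ref{lem2} and the Cayley--Menger determinant) from the \emph{true} mutual distances, so they are the true barycentric coordinates of agent~$i$ with respect to $\Theta_i(k)$, and the self-weight $\alpha_k$ together with $1-\alpha_k$ redistributes a convex combination that still reproduces $\mb{x}_k^{i\ast}$. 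Subtracting this from Eq.~\eqref{eq1} cancels both the anchor input $\mb{B}_k\mb{u}_k$ (the anchor coordinates are known and identical in both equations, by {\bf A0}) and the motion term, leaving the homogeneous recursion $\mb{e}_{k+1} = \mb{P}_k \mb{e}_k$.

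Next I would check that the sequence $\{\mb{P}_k\}$ meets the structural hypotheses of Theorem~\ref{thm0}. By the algorithm description in Section~\ref{sec4}, at most one agent updates per time step, so at most one row of $\mb{P}_k$ differs from the identity row. The classification at the end of Section~\ref{sec4} (identity / identity-plus-stochastic-row / identity-plus-strictly-sub-stochastic-row) gives precisely the dichotomy required: when the updating row sums to one, Eq.~\eqref{B0eq} furnishes the lower bound $0<\beta \le (\mb{P}_k)_{i,i}$, which is exactly {\bf B0} with $\beta_1=\beta$; when the updating row sums to strictly less than one — i.e.\ when an anchor sits in the virtual convex hull — Assumption {\bf A3}, Eq.~\eqref{23}, lower-bounds the anchor weight $(\mb{B}_k)_{i,m}$ by $\alpha$, and since the full row of $[\mb{P}_k\ \mb{B}_k]$ sums to one (barycentric coordinates sum to one, scaled by $1-\alpha_k$, plus the self-weight $\alpha_k$), the agent-only part $\sum_j (\mb{P}_k)_{i,j}$ sums to at most $1-\alpha < 1$; this is {\bf B1} with $\beta_2 = 1-\alpha$. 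Lemma~\ref{lem4} guarantees that infinitely many update opportunities actually arise, so the chain is not eventually frozen at the identity, and the slice decomposition of Fig.~\ref{figs} is well-defined. With these identifications, the hypothesis of Theorem~\ref{thm0} — existence of a subset $J$ of slices whose lengths obey the growth bound Eq.~\eqref{growth}, with the remaining slices merely finite — is assumed in the statement of the present lemma, so Theorem~\ref{thm0} applies directly and yields $\lim_{k\to\infty}\mb{e}_k = \mb{0}$, i.e.\ the product of the $\mb{P}_k$'s converges to zero.

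The main obstacle, and the step that deserves the most care, is justifying that the noiseless true trajectory $\mb{x}_k^\ast$ genuinely satisfies the \emph{same} affine recursion as the estimate, so that the subtraction is legitimate. Two subtleties enter: (i) the barycentric coordinates used in the update are not measured directly but reconstructed from tracked distances, so one must argue that the geometric tracking of Lemmas~\ref{lem1}--\ref{lem2} returns the exact distances $\widetilde d$ among the \emph{true historical positions} $\mb{x}_{k_j}^{j\ast}, \mb{x}_{k_\ell}^{\ell\ast}, \mb{x}_{k_n}^{n\ast}$ and the current true position, hence the Cayley--Menger areas and thus the $a_k^{ij}$ are the true barycentric coordinates of $\mb{x}_k^{i\ast}$ in the virtual triangle $\mc{C}(\Theta_i(k))$; and (ii) the convex combination in Eq.~\eqref{18} mixes the agent's \emph{own} previous state with the barycentric combination of the \emph{others}' states, and for the cancellation to work one needs $\alpha_k \mb{x}_k^{i\ast} + (1-\alpha_k)\sum_{j\in\Theta_i(k)} a_k^{ij}\mb{x}_k^{j\ast} = \mb{x}_k^{i\ast}$, which holds because $\sum_j a_k^{ij}\mb{x}_k^{j\ast} = \mb{x}_k^{i\ast}$ by the barycentric identity, independent of $\alpha_k$. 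Once these two points are nailed down — and they follow from the noiseless geometry established in Section~\ref{sec3} — the rest is bookkeeping, and the convergence conclusion is inherited wholesale from Theorem~\ref{thm0}.
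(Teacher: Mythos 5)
Your proof is correct and takes essentially the same route as the paper's: verify {\bf B0} from Eq.~\eqref{B0eq} with $\beta_1=\beta$, verify {\bf B1} from {\bf A3} via the row-sum bound $\sum_{j}(\mb{P}_k)_{i,j}\le 1-\alpha<1$ so that $\beta_2=1-\alpha$, use Lemma~\ref{lem4} to guarantee infinitely many updates (hence a well-defined slice decomposition), and invoke Theorem~\ref{thm0}. The error-dynamics derivation in your first and third paragraphs is extraneous to this lemma, which only asserts $\prod_k\mb{P}_k\to\mb{0}$; that subtraction argument is precisely the content of the paper's proof of Theorem~\ref{th2}, though it is carried out correctly here.
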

\begin{proof}
First, we need to \textcolor{black}{show that the Assumptions~{\bf B0-B1} follow from~{\bf A0-A3}}. First, note that~{\bf B0} is immediately verified by Eq.~\eqref{B0eq}, assuming $\beta_1=\beta$. \textcolor{black}{Next, we note that~{\bf B1}} is implied by~{\bf A3}. This is because if~$\Theta_i\cap\kappa$ is not empty, we can write
\begin{eqnarray}\label{30}
\sum_{j\in\Theta_i(k)\cap\Omega}{{({\bf{P}}_{k})}_{i,j}} = 1-\sum_{m\in\Theta_i(k)\cap\kappa}{{({\bf{B}}_{k})}_{i,m}},
\end{eqnarray}
where we used the fact that \textcolor{black}{the} barycentric coordinates sum to one.
When there is only one anchor among the \textcolor{black}{nodes forming the virtual convex hull}, and the minimum weight is assigned to this anchor, Eq.~\eqref{23}, the right hand side of Eq.~\eqref{30} is maximized. This provides an upper bound on the~$i$-th row sum:
\begin{eqnarray}\label{31}
\sum_{j\in\Theta_i(k)\cap\Omega}{{({\bf{P}}_{k})}_{i,j}} \leq 1-\alpha <1,
\end{eqnarray}
which ensures~{\bf B1} with~$\beta_2=1-\alpha$. In addition, Lemma~\ref{lem4} ensures that each agent updates infinitely often with different neighbors. Subsequently, each slice is completed after all agents receive anchor information (at least once) either directly or indirectly, and the asymptotic convergence of Eq.~\eqref{eq1} follows under the conditions in Theorem~\ref{thm0}.
\end{proof}
\noindent Note that Assumption {\bf{B0}}, which is equivalent to Eq.~\eqref{B0eq}, implies that each agent remembers its past information. If a lower bound on the self-weights is not assigned, an agent may lose valuable information when it updates with other agents that have not previously \textcolor{black}{updated} 
(directly or indirectly) 
with an anchor. On the other hand, Assumption {\bf{B1}} restricts the amount of unreliable
information added in the network by the agents, when an anchor is involved in an update.

The following theorem completes the localization algorithm for mobile multi-agent networks. 

\begin{thm}\label{th2}
\textcolor{black}{Under the Assumptions {\bf A0-A3} and no noise, for any
(random or deterministic) motion that satisfies Eq.~\eqref{growth}, Eq.~\eqref{eq1} asymptotically tracks the exact agent locations.}
\end{thm}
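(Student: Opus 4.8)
The plan is to reduce Theorem~\ref{th2} to Lemma~\ref{lem5} by a change of variables that turns the affine tracking problem~\eqref{eq1} into the homogeneous LTV system analyzed in Theorem~\ref{thm0}. First I would introduce the error vector~$\mb{e}_k \triangleq \mb{x}_k - \mb{x}_k^{\ast}$, where~$\mb{x}_k^{\ast}$ is the vector of true agent locations. The key observation is that, in the noiseless case, the true locations satisfy the same barycentric identity~\eqref{Eq5} that defines the update, since agent~$i$ genuinely lies inside~$\mc{C}(\Theta_i(k))$ whenever an update is triggered, and the barycentric coordinates~$a_k^{ij}$ are computed exactly (via Lemmas~\ref{lem1} and~\ref{lem2} together with the Cayley--Menger determinant). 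Hence~$\mb{x}_{k+1}^{\ast} = \mb{P}_k\mb{x}_k^{\ast} + \mb{B}_k\mb{u}_k + \widetilde{\mb{x}}_{k+1}^{\ast}$, where the input term involves the \emph{true} anchor positions; but by Assumption~{\bf A0} the anchors are stationary with known location, so~$\mb{u}_k$ in~\eqref{eq1} equals the true anchor coordinates, and the true motion~$\widetilde{\mb{x}}_{k+1}$ is exactly the one used in the update (no noise). Subtracting, the affine and input terms cancel, leaving~$\mb{e}_{k+1} = \mb{P}_k\mb{e}_k$.

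Next I would verify that the matrices~$\{\mb{P}_k\}$ arising here are precisely of the type covered by Theorem~\ref{thm0}: by the case analysis at the end of Section~\ref{sec4}, each~$\mb{P}_k$ differs from the identity in at most one row~$i$ (since at most one agent updates per time step), and that row is either stochastic (no anchor in~$\Theta_i(k)$) or strictly sub-stochastic (an anchor is present). The self-weight bound~\eqref{B0eq} gives Assumption~{\bf B0} with~$\beta_1 = \beta$, and the anchor-contribution Assumption~{\bf A3} together with~\eqref{30}--\eqref{31} gives Assumption~{\bf B1} with~$\beta_2 = 1-\alpha$; this is exactly the content established in the proof of Lemma~\ref{lem5}. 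Invoking Lemma~\ref{lem4}, each agent updates infinitely often with varying neighbors, so slices are well-defined and complete (anchor information reaches every agent directly or indirectly within each slice). Therefore, provided the motion is such that the slice lengths obey the growth bound~\eqref{growth}, Theorem~\ref{thm0} applies to~$\mb{e}_{k+1} = \mb{P}_k\mb{e}_k$ and yields~$\lim_{k\to\infty}\mb{e}_k = \mb{0}_N$, i.e.,~$\mb{x}_k \to \mb{x}_k^{\ast}$, which is the asymptotic tracking claim.

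The main obstacle is justifying the cancellation step rigorously, i.e., confirming that the \emph{true} location vector is a genuine fixed trajectory of the noiseless update operator. This requires two facts: (a) that whenever the inclusion test is passed, agent~$i$'s true position really does lie in the convex hull of the true positions~$\{\mb{x}_{k_j}^{j\ast}\}$ of the visited nodes (so the barycentric identity holds for the starred quantities), and (b) that the coefficients~$a_k^{ij}$ the agent computes from the tracked distances coincide with the true barycentric coordinates~\eqref{Eq6} — this is where the exactness of the geometric tracking in Lemmas~\ref{lem1} and~\ref{lem2}, and of the area computation via Cayley--Menger determinants, is essential. One subtlety worth flagging is that the virtual vertices are positions the neighbors occupied at \emph{past} times~$k_j, k_\ell, k_n$, not their current positions, so the "true" target trajectory against which convergence is measured must be interpreted consistently; since anchors are stationary this is only an issue for agent-to-agent edges, and the error recursion~$\mb{e}_{k+1}=\mb{P}_k\mb{e}_k$ still holds because the same past-time estimates~$\mb{x}_{k_j}^{j}$ appear on both the estimated and the true side of the subtraction. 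Once these points are nailed down, the rest is a direct appeal to Theorem~\ref{thm0} via Lemma~\ref{lem5}.
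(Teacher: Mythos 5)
Your proposal is correct and follows essentially the same route as the paper: write the true-location dynamics ${\bf{x}}^{*}_{k+1}={\bf{P}}_{k}{\bf{x}}^{*}_{k}+{\bf{B}}_{k}{\bf{u}}_k+\widetilde{{\bf{x}}}_{k+1}$, subtract to obtain the homogeneous error recursion $\mb{e}_{k+1}={\bf{P}}_{k}\mb{e}_{k}$, and conclude via Lemma~\ref{lem5} (i.e., Theorem~\ref{thm0}) that the product of system matrices vanishes. Your additional care in justifying why the true trajectory is a fixed trajectory of the noiseless update (exactness of the barycentric coordinates and the past-time virtual vertices) is a point the paper asserts without elaboration, but it does not change the argument.
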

\begin{proof}
In order to show the convergence to the true locations, we show that the error between the location estimate,~$\mb{x}_k$, and the true location,~$\mb{x}^*_k$, goes to zero. To find the error dynamics, note that the true agent locations follow:
\begin{eqnarray}\label{eq2}
{\bf{x}}^{*}_{{k+1}}={\bf{P}}_{{k}}{{\bf{x}}^{*}_{{k}}}+{\bf{B}}_{{k}}{\bf{u}}_k+\widetilde{{\bf{x}}}_{{k+1}}.
\end{eqnarray}
Subtracting Eq.~\eqref{eq1} from Eq.~\eqref{eq2}, we get the network error
\begin{eqnarray}\label{eq3}
\mb{e}_{k+1}\triangleq{\bf{x}}^{*}_{{k+1}}-{\bf{x}}_{{k+1}}={\bf{P}}_{{k}} ({\bf{x}}^{*}_{{k}}-{\bf{x}}_{{k}})={\bf{P}}_{{k}}\mb{e}_{{k}},
\end{eqnarray}
which goes to zero when
\begin{eqnarray}\label{eq7}
\lim_{k \rightarrow \infty} \prod_{l=0}^{k} {\bf{P}}_{l}=\mb{0}_{N\times N},
\end{eqnarray}
\textcolor{black}{from Lemma~\ref{lem5}.}
\end{proof}

\textcolor{black}{Before we proceed, it is reasonable to comment on the choice of parameter $\alpha$; The proposed localization algorithm is proved, both in theory and simulations, to converge for any value of $0<\alpha<1$. However, the convergence rate of the algorithm is affected by the choice of $\alpha$ as follows. Choosing~$\alpha$ arbitrarily close to zero corresponds to an infinitely large upper-bound on the length of a slice (this can be verified by replacing~$\alpha=1-\beta_2$ by zero on the \textcolor{black}{right hand side of} Eq.~\eqref{growth}. On the other hand, by setting $\alpha$ arbitrarily close to one, an agent has to get arbitrarily close to an anchor in order to perform an update with respect to the anchor (see Fig.~\ref{figg}), which again leads to arbitrary large number of iterations required for each slice to complete. In summary, there is a trade-off in the choice of $\alpha$, between receiving more information from an anchor at the time of an update, which is the case for $\alpha$ values closer to $0$, and increasing the chance of an update with an anchor, which requires for $\alpha$ to be closer to $1$. A proper choice can be made by considering the motion model,  the communication protocol, and the number of available anchors in the network.}

The following theorem characterizes the number of anchors.
\begin{thm}\label{thm3}
Under Assumptions~{\bf A0-A3} and no noise, Eq.~\eqref{eq1} tracks the true agent locations in~$\mbb{R}^2$, when \textcolor{black}{the number of agents, $N$, and the number of anchors, $M$, follow:}
\begin{eqnarray}
M&\geq&1,\label{32}\\
N+M&\geq& 4.
\end{eqnarray}
\end{thm}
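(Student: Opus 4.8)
The plan is to reduce the claim to the hypotheses of Theorem~\ref{th2} by verifying that a network with $M\geq 1$ anchor and $N+M\geq 4$ total nodes admits, for each agent, the infinitely-often virtual-convex-hull structure required by Lemmas~\ref{lem4} and~\ref{lem5}. First I would argue that $N+M\geq 4$ is exactly what is needed for the barycentric update in Eq.~\eqref{Eq5} to be \emph{possible at all}: a virtual convex hull around agent $i$ requires three \emph{other} nodes in $\mc{V}_i(k)$, so the node count must be at least $4$. Conversely, Lemma~\ref{lem4} already guarantees that, under the stated motion model in a bounded region with communication radius $r$, each agent $i$ does find some $\Theta_i(k)\subseteq\mc{V}_i(k)$ with $|\Theta_i(k)|=3$ and $i\in\mc{C}(\Theta_i(k))$ for infinitely many $k$; this is the content that makes $N+M\geq 4$ not merely necessary but sufficient for updates to keep occurring.

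Next I would handle the anchor count. The point of $M\geq 1$ is that the chain of system matrices $\{\mb{P}_k\}$ cannot consist solely of identity and stochastic matrices: some agent must, infinitely often, include the (unique) anchor in its virtual hull, producing a strictly sub-stochastic row (case (iii) in Section~\ref{sec4}) — this is what initiates a slice in the sense of Theorem~\ref{thm0}. I would spell out the information-flow argument already sketched in Section~\ref{ca}: starting from a strictly sub-stochastic update at an agent adjacent (virtually) to the anchor, the deficiency $1-\alpha<1$ on that row propagates to every other agent through subsequent (possibly stochastic) updates, because Lemma~\ref{lem4} ensures every agent updates infinitely often with varying neighbor sets, so eventually every agent has in its hull some node that has already absorbed anchor information either directly or indirectly. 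Thus each slice terminates, $\|\mb{M}_t\|_\infty<1$ for every slice, and the products satisfy the framework of Theorem~\ref{thm0}; combined with the growth condition Eq.~\eqref{growth} (assumed on the motion, as in Theorem~\ref{th2}), Lemma~\ref{lem5} gives $\prod_l \mb{P}_l\to\mb{0}$, hence by Theorem~\ref{th2} the estimates track the true locations. Finally I would note that $M\geq 1$ is also \emph{necessary}: with $M=0$ all $\mb{P}_k$ are stochastic or identity, $\mb{1}$ is a fixed point, and any constant offset in the initial error persists, so convergence to the true locations fails; this shows the bound is tight and explains the ``does mobility reduce the number of anchors'' question affirmatively (from $3$ to $1$).

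The main obstacle I anticipate is not the algebra but making the sufficiency direction airtight: one must be certain that $N+M\geq4$ together with the random rotations $\alpha_k^i$, $\beta_k^i$ and bounded motion genuinely forces the \emph{recurrent} containment $i\in\mc{C}(\Theta_i(k))$ with an anchor appearing often enough — in particular that Assumption~{\bf A3} (the anchor weight staying above $\alpha$, i.e., agent $i$ landing in the ``appropriate'' sub-region of the hull of Fig.~\ref{figg}) is compatible with this recurrence and does not secretly require more nodes or more anchors. I would lean on Lemma~\ref{lem4} for the containment part and argue separately that, since the anchor is stationary and the region is bounded with a positive-probability rotation model, the event ``anchor in an appropriate position in agent $i$'s virtual hull'' also recurs infinitely often (at least for one agent, which suffices to seed every slice). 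If a fully rigorous recurrence argument for~{\bf A3} is delicate, the fallback is to absorb it into the hypothesis — exactly as Theorem~\ref{th2} already does by assuming the motion satisfies Eq.~\eqref{growth} — so that Theorem~\ref{thm3} reads as: given a motion realization that keeps generating the required sub-stochastic updates, the node-count bounds $M\geq1$, $N+M\geq4$ are precisely the structural conditions under which the LTV analysis applies.
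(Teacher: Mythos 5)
Your proposal matches the paper's own proof in both structure and substance: necessity of $M\geq 1$ because without an anchor no strictly sub-stochastic row ever appears and the infinite product of stochastic $\mb{P}_k$'s cannot vanish, necessity of $N+M\geq 4$ because a (virtual) convex hull in $\mbb{R}^2$ needs three nodes other than the updating agent, and sufficiency by invoking the recurrence argument of Lemma~\ref{lem4} so that Theorem~\ref{thm0} (via Lemma~\ref{lem5} and Theorem~\ref{th2}) applies. Your extra caution about whether the Assumption~\textbf{A3} weight condition recurs often enough is a fair observation about a point the paper leaves implicit, but it does not alter the route, and your fallback of absorbing it into the motion hypothesis is effectively what the paper already does through Eq.~\eqref{growth}.
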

\begin{proof}
\textcolor{black}{Let us} first consider the requirement of at least one anchor. Without an anchor, a strictly sub-stochastic row never \textcolor{black}{appears in~$\mb{P}_k$, making ~$\mb{P}_k$ stochastic at each time and hence the infinite product of~$\mb{P}_k$'s is also stochastic and not zero. With at least one anchor,} strictly sub-stochastic rows, following Eq.~\eqref{31}, appear in~$\mb{P}_k$'s, and \textcolor{black}{zero convergence of the error dynamics, Eq.~\eqref{eq3}, follows.} Next, exactly~$3$ nodes (agents and/or anchors) are required to form a (virtual) convex hull in\textcolor{black}{~$\mathbb{R}^{2}$}. Thus, when the total number of nodes,~$N+M$, is $3$ or less, no agent can find~$3$ other nodes to find (virtual) convex hulls. \textcolor{black}{Using the same argument as the one in the proof of Lemma~\ref{lem4}, we can show that with at least one anchor and at least~$4$ total nodes, any agent with unknown location infinitely finds itself in arbitrary (virtual) convex hulls.} Thus, Theorem~\ref{thm0} is applicable and the proof follows. 
\end{proof}

\section{Mobile localization under imperfect measurements}\label{noise}
The noise on distance measurements and motion degrades the performance of the localization algorithm, as expected, and in certain cases the location error is as large as the region of motion; this is shown experimentally in Section~\ref{SEC7-B}. In what follows, we discuss the modifications, {\bf{M1-M3}}, to the proposed algorithm to address the noise in case of motion,~${\widehat{\mb{x}}_{k}^i}$, and on the distance measurements,~$\widehat{d}^{ij}_k$. 
\subsection{Agent contribution}
If an agent is located close to the boundaries of a (virtual) convex hull, noise on distance measurements may lead to incorrect inclusion test, {\textcolor{black}{Appendix~\ref{incl}}, results, i.e., a set of visited neighbors may fail the convexity test while the agent is indeed located inside the convex hull, or vice versa. To avoid such scenarios, we generalize Assumption {\bf{A3}} to non-anchor agents. In particular, we restrict an agent from performing an update, unless it is located in a proper position inside a candidate (virtual) convex hull. We augment Assumption {\bf{A3}} by the following:

{\bf{M1:}} For any agent, $j \in \Omega$, involved in an update, i.e., for any~$({\bf{P}}_k)_{i,j} \neq 0$, we assume that
\begin{eqnarray}\label{ac}
0 < {\alpha}^{\prime} \leq ({\bf{P}}_k)_{i,j},\qquad \forall \textcolor{black}{k,i}\in\Omega, {j\in\Theta_i(k)\cap\Omega},
\end{eqnarray}
where $i$ is the updating agent's index, and ${\alpha}^{\prime}$ is the minimum agent contribution. In other words, for agent $i$ to perform an update, the area of the corresponding inner triangle 
has to constitute a minimum proportion, ${\alpha}^{\prime}$, of the area of the convex hull triangle. This is shown in Fig.~\ref{fign} (Left); 
\begin{figure}[!h]
	\centering
	\includegraphics[height=1in]{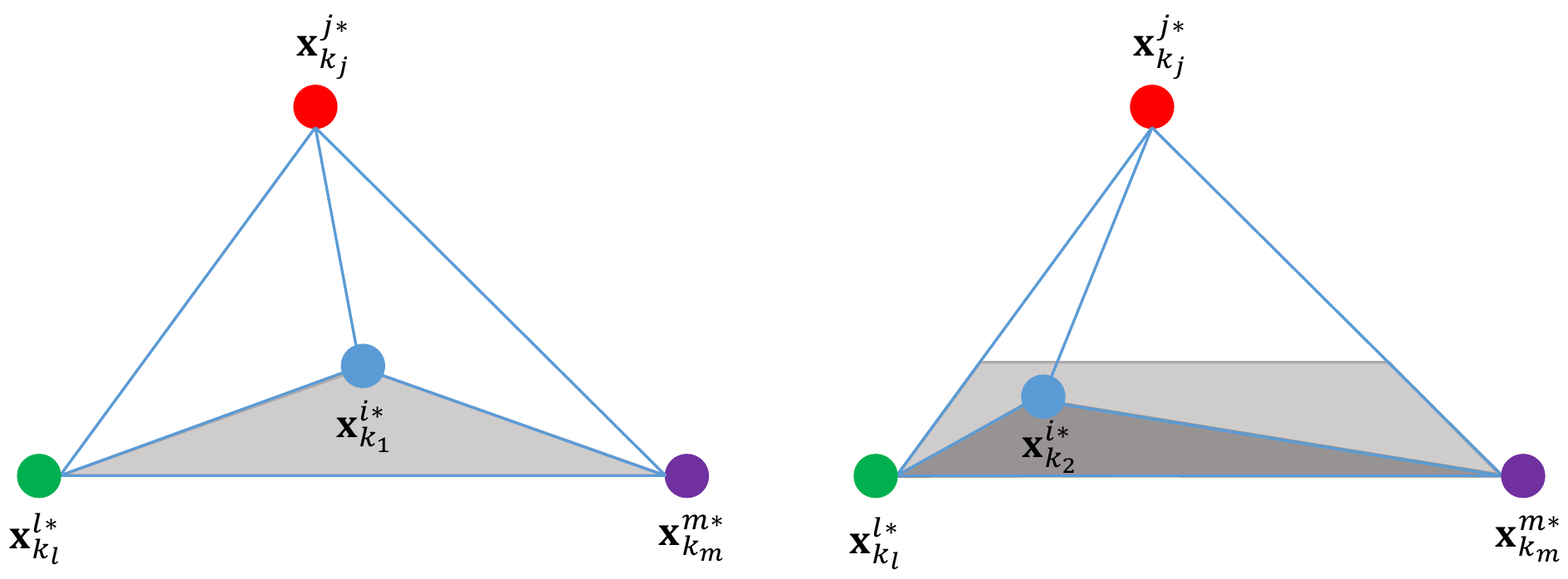}
	\caption{(Left) Agent $i$ is located on the threshold boundaries, which assigns the minimum weight, $\alpha$, to the anchor. (Right) Agent $i$ is located in an inappropriate location inside the convex hull.}
	\label{fign}
\end{figure}
At time $k_1$, agent~$i$ is located in the virtual convex hull of the nodes $j$, $m$ and $\ell$, such that the ratio of the area of the 
shaded triangle to the area of the convex hull triangle
is $\alpha^\prime$. 
Agent $i$ can perform an update at time $k_1$. The upper side of the shaded trapezoid in Fig.~\ref{fign} (Right) is the threshold boundary. If agent $i$ remains in the same virtual convex hull, but moves inside the shaded trapezoid, e.g., to ${\bf{x}}^{i\ast}_{{k_2}}$ at time $k_2$, Eq.~\eqref{ac} will not hold and no update will occur \textcolor{black}{at} that time.
Clearly, Assumption~{\bf{M1}} can be relaxed if the noise is negligible. Otherwise, the value of~${\alpha}^{\prime}$ can be adjusted according to the amount of noise on distance measurements.

\subsection{Inclusion test error}
Even if \textcolor{black}{Assumption {\bf{M1}}} is satisfied, the inclusion test results may not be accurate due to the noise on the motion, which corresponds to imperfect location updates \textit{at each and every iteration}. To tackle this issue, we propose the following modification to the algorithm:

{\bf{M2:}} If the inclusion test is passed at time $k$ by a triangulation set, $\Theta_i(k)$,  agent $i$ performs an update only if 
\begin{eqnarray}\label{iterror}
\textcolor{black}{\epsilon_k^{i}=\vert \frac{\sum_{j\in\Theta_i(k)}A_{\Theta_i(k)\cup\{i\}\setminus j} - A_{\Theta_i(k)}}{A_{\Theta_i(k)}}\vert \leq\epsilon,}
\end{eqnarray}
in which~\textcolor{black}{$\epsilon_k^{i} \geq 0$} is the \textit{inclusion test relative error} for agent $i$ at time $k$, and \textcolor{black}{$\epsilon \geq 0$} is a design parameter.

\subsection{Convexity}
Finally, recall that if the inclusion test is passed at time~$k$, by a triangulation set, $\Theta_i(k)$, the updating agent, $i$, updates its location estimate,~$\mathbf{x}_k^{i\ast}$, as a convex combination of the location estimates of the nodes in $\Theta_i(k)$. In order to guarantee the convexity in the updates \textcolor{black}{in presence of noise}, we consider the following modification to the algorithm:

{\bf{M3:}} If the inclusion test is passed at time $k$ by a triangulation set, $\Theta_i(k)$,  and Eq.~\eqref{iterror} holds at time $k$, agent $i$ (randomly) chooses one of the $i$-visited nodes from the triangulation set~$\Theta_i(k)$, say agent $j$, and finds the corresponding weight,~$a_{k}^{{ij}}$, as follows: assuming that $a_k^{im} + a^{in}_k < 1$,
\begin{equation}
a_{k}^{{i}j}= 1-a_{k}^{{i}m}-a_{k}^{{i}n},\qquad \{j,m,n\}\in{\Theta_i(k)}.
\end{equation}

\section{Simulations}\label{sec7}
In this section, we first provide simulation in noiseless scenarios and then examine the effects of noise.

\subsection{Localization without noise}
We now consider the noiseless scenarios, i.e., we assume that~$n_k^i=0$ and $r_k^{ij}=0$, in Eqs.~\eqref{Eq2} and~\eqref{Eq3}. In the beginning, all nodes (agents and anchor(s)) are randomly deployed within the region of $x\in[-5~15],~y\in[-5~15]$ in $\mathbb{R}^{2}$. \textcolor{black}{For the simulations, we consider \textit{random waypoint mobility} model,\cite{camp2002survey}, for the motion in the network, which has been the predominant motion model in the localization literature over the past decade, see e.g.,~\cite{rudafshani2007localization,hu2004localization,dil2006range,wang2009sequential,7496598}. We note that better performance may be achieved if the agents follow some deterministic (mission-related) motion model. However, our localization algorithm converges if the motion is such that the slice lengths grow slower than the exponential rate in Eq.~\eqref{growth}.} For all mobile nodes, $i\in\Theta$, we chose~$d^i_{k\rightarrow k+1}$ and~$\theta^i_{k\rightarrow k+1}$ in Eq.~\eqref{1} as random variables with uniform distributions over the intervals of $[0~5]$ and~$[0~2\pi]$, respectively. Each agent can only communicate with the nodes within its communication radius, which is set to $r=2$ for all simulations. \textcolor{black}{For each simulation, we assume exactly one fixed anchor, i.e., $M=1$.}
In all simulations, we set~$\alpha_k=0.2$ to ensure that the agents do not completely forget the past information and~$\alpha=0.1$ to guarantee an adequate contribution from the anchor(s).

Fig.~\ref{f1-2} (Left) shows the random trajectories of~$N=3$ mobile agents \textcolor{black}{and~$M=1$ anchor}} for the first~$25$ iterations. To characterize the convergence, we choose the $2$-norm of the error vector normalized with respect to the dimensions of the region as follows:
{\begin{equation}
{\Vert{\mb{e}_k}\Vert}_2=\frac{1}{2}\left({{\left(\frac{\mb{x}_k-\mb{x}_k^\ast}{\Delta x}\right)}^2+{\left(\frac{\mb{y}_k-\mb{y}_k^\ast}{\Delta y}\right)}^2}\right)^{\frac{1}{2}},
\end{equation}}in which \textcolor{black}{$\Delta x=20$ and $\Delta y=20$ denote the lengths of the region, which is a square in this case.}
\textcolor{black}{The division by $\Delta x$ and~$\Delta y$ is done to obtain a normalized error. It can be verified that the maximum error with this normalization is $\frac{1}{\sqrt{2}}$.}  
The algorithm converges when~${\| \mb{e}_k \|_{2}} \rightarrow 0$. As shown in Fig.~\ref{f1-2} (Right), the localization algorithm, Eq.~\eqref{eq1}, tracks the true agent locations.
\begin{figure}[!h]
\centering
\subfigure{\includegraphics[height=1.5in]{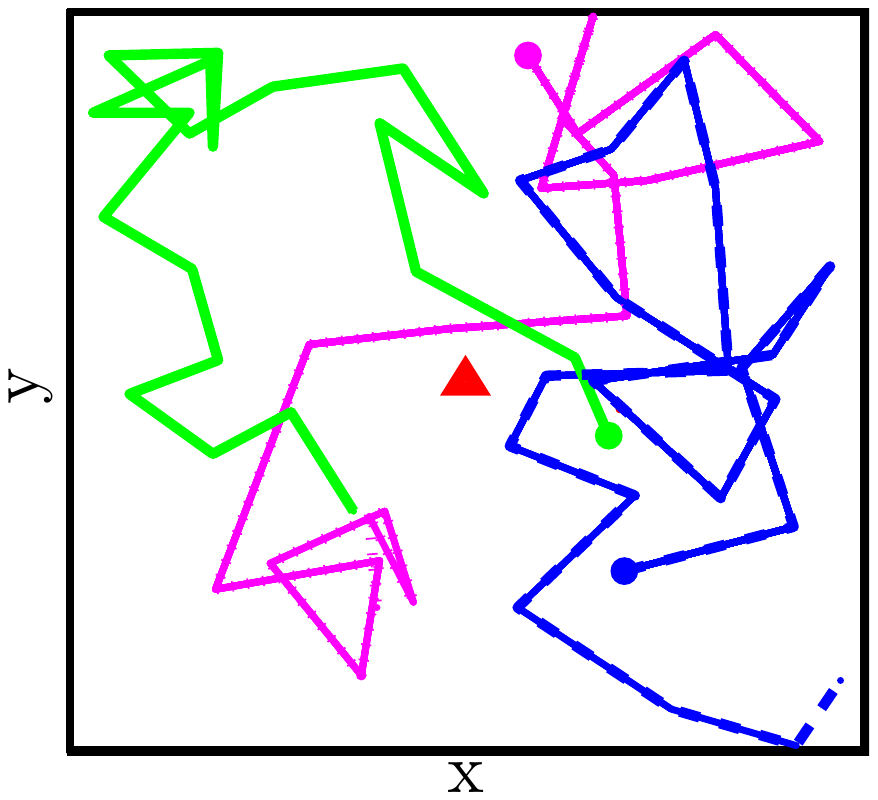}}
\hspace{0.1cm}
\subfigure{\includegraphics[height=1.6in,width=1.67in]{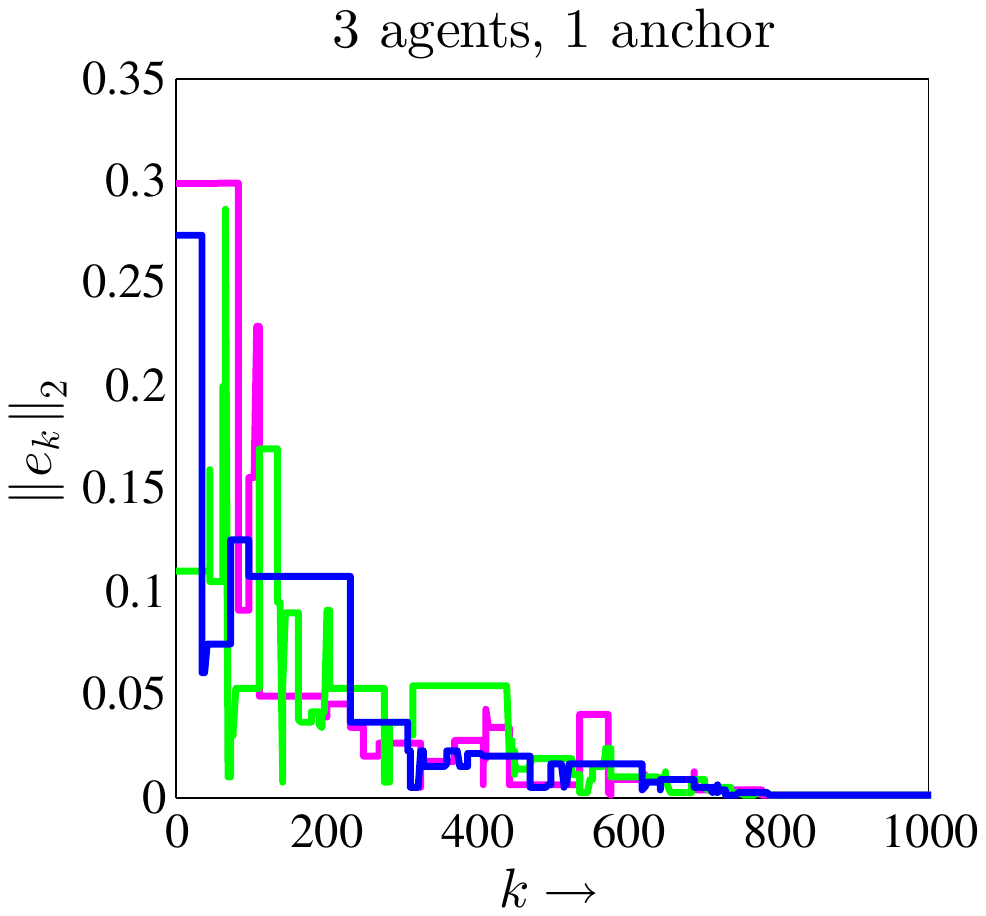}}
\caption{A network of~$3$ mobile agents \textcolor{black}{and~$1$ anchor}; \textcolor{black}{(Left) Motion model; (Right) Convergence; red triangle indicates anchor and filled circles show the initial positions of the agents.}}
\label{f1-2}
\end{figure}
The simulation results for a network of \textcolor{black}{$10$ and $100$} mobile agents and one mobile anchor is illustrated in Fig.~\ref{f10-2} (Left) and (Right), respectively.
\begin{figure}[!h]
	\centering
	\subfigure{\includegraphics[height=1.67in,width=1.67in]{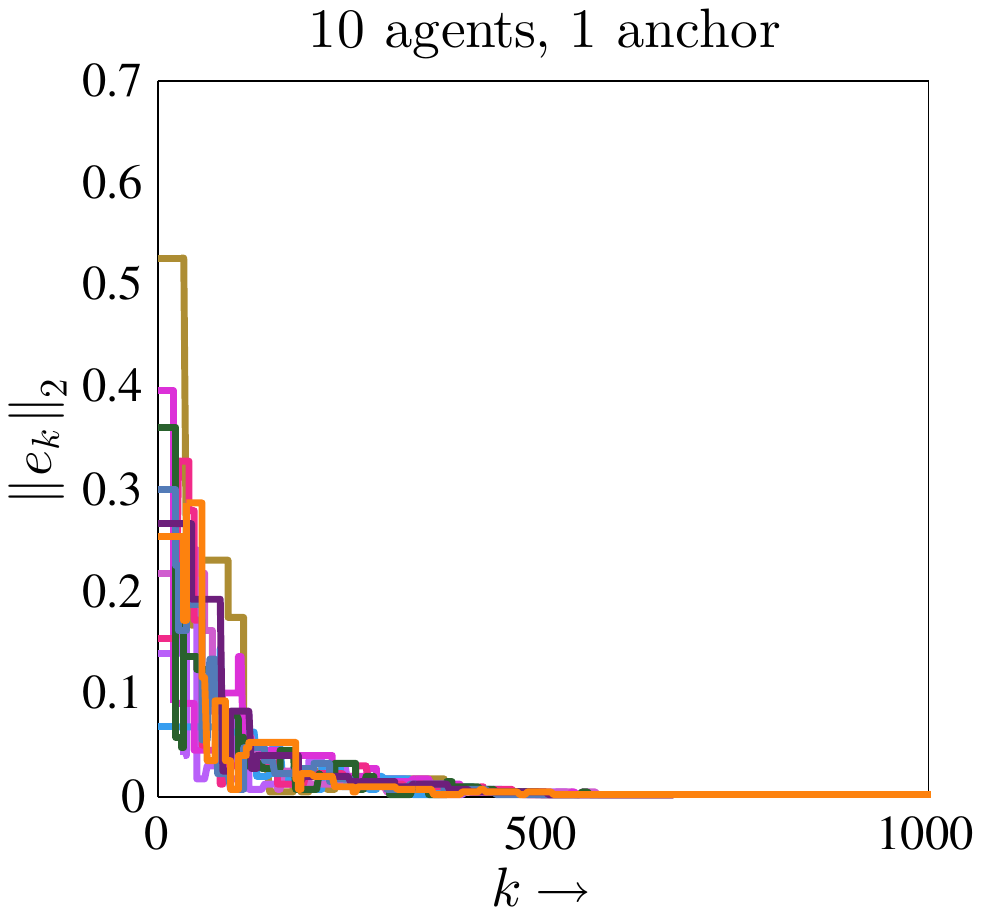}}
	\hspace{0.1cm}
	\subfigure{\includegraphics[height=1.67in,width=1.67in]{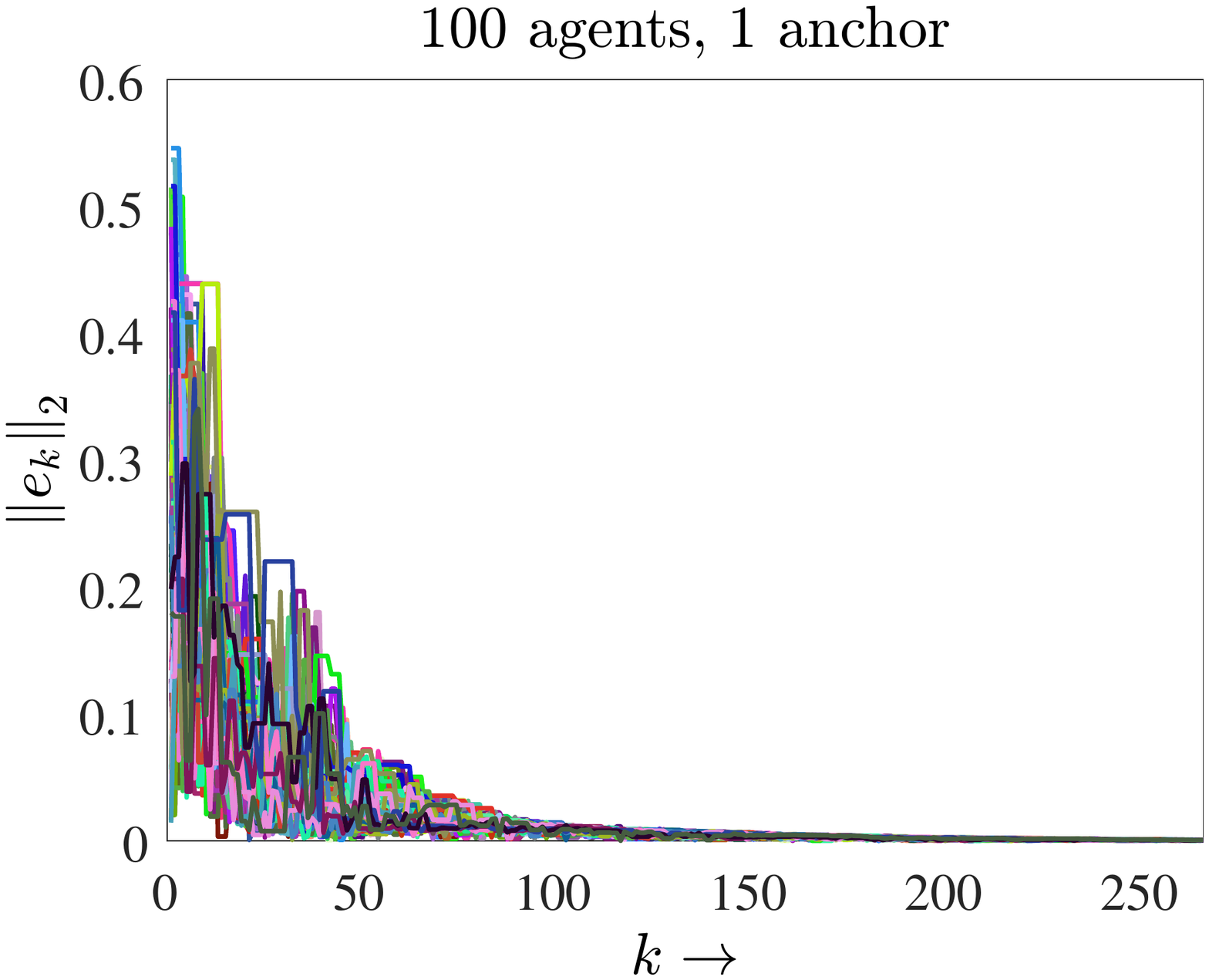}}
	\caption{\textcolor{black}{Convergence of a network with \textcolor{black}{one anchor}; (Left)~$10$ mobile agents (Right)~$100$ mobile agents.}}
	\label{f10-2}
\end{figure}
%
%

Before we examine the effects of noise on the proposed localization algorithm, let us study the case where there is no anchor in the network, i.e., Eq~\eqref{32} is not satisfied;
\textcolor{black}{In most} localization algorithms, $m+1$ anchors are required to localize an agent with unknown location in~$\mathbb{R}^{m}$ without ambiguity. However, when the nodes are mobile, \textit{exactly one anchor} is sufficient to inject valuable information and \textcolor{black}{the motion of the agents} provides the remaining degrees of freedom. When there is no anchor in the network, all updates are stochastic, and we get $\mb{e}_{{k+1}} = {\bf{P}}_{{k}}\mb{e}_{{k}}$, with $\rho({\bf{P}_k}) = 1,\forall k$, \textcolor{black}{where $\rho(.)$ denotes the spectral radius}, i.e., a neutrally stable system, which leads to a bounded steady-state error in location estimates. In other words, relative locations are tracked because the motion removes certain ambiguity from the locations. However, absolute tracking without any ambiguity requires at least one anchor. This is illustrated in~Fig.~\ref{figgg} for a network of $M=4$ agents and no anchor.

\begin{figure}[!h]
	\centering
	\subfigure{\includegraphics[height=1.67in,width=1.67in]{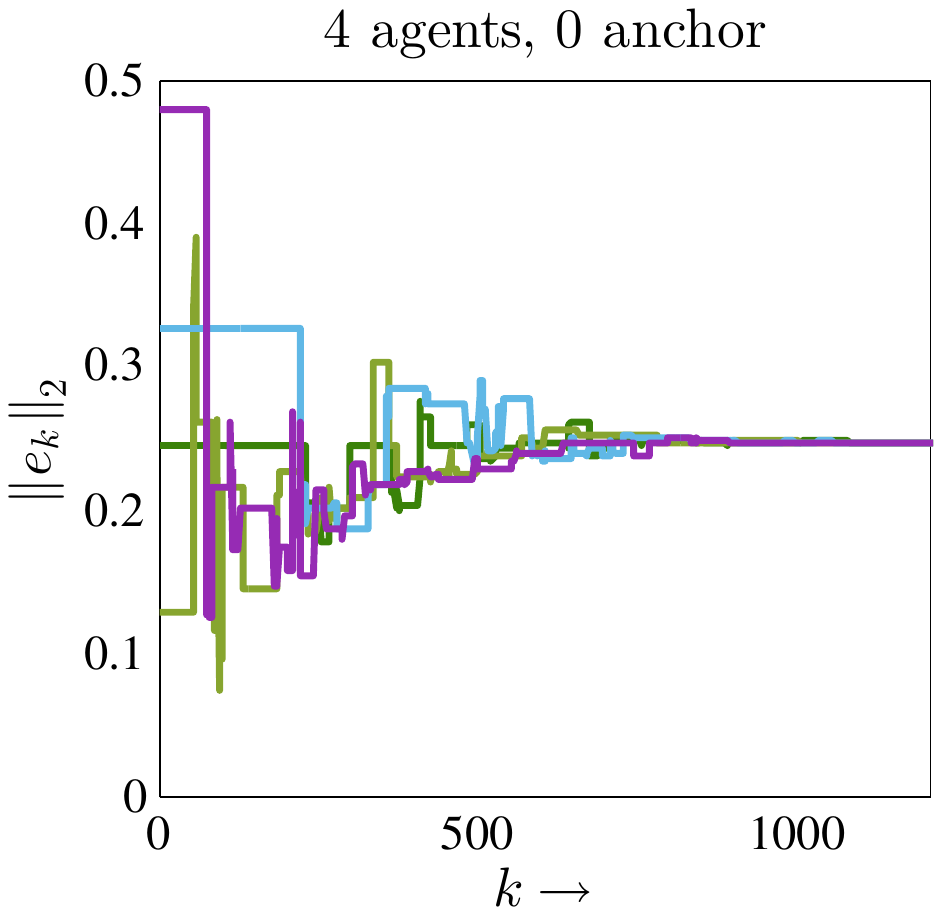}}
	\hspace{0.1cm}
	\subfigure{\includegraphics[height=1.67in,width=1.67in]{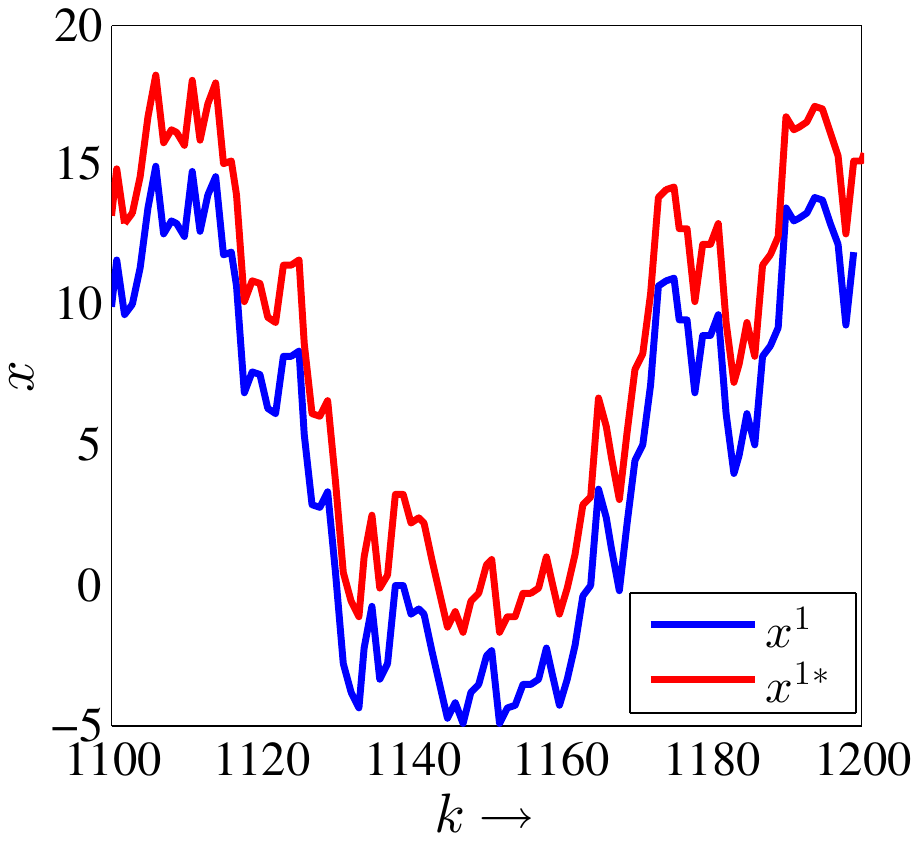}}
	\caption{Steady state error for a network of $4$ mobile agents and no anchor; (Left) $2$-norm (Right) Red and blue curves represent the exact and estimated values of~$x$ coordinate, for the last~$100$ iterations of a simulation.}
	\label{figgg}
\end{figure}

\vspace{-0cm}
\subsection{Localization with noise}\label{SEC7-B}
To examine the effects of noise on the proposed localization algorithm, we let the noise on the motion of agent $i \in \Theta$ at time $k$,  $n^i_{k}$ to be $\pm 1\%$ of the magnitude of the motion vector,~${\widetilde{\mb{x}}_{k}^i}$, and the noise on distance measurements, $r^{ij}_k$ to be $\pm 10\%$ of the actual distance between nodes $i$ and $j$,~$d^{ij}_k$. As shown in Fig.~\ref{f11-3} (Left) this amount of noise leads to an \textcolor{black}{unbounded error}, which is due to incorrect inclusion test results and the continuous location drifts because of the noise on the distance measurements and the noise on motion, respectively. However, by modifying the algorithm according to Section~\ref{noise}, it can be seen in Fig.~\ref{f11-3} (Right) that the normalized localization error is reduced to less than $5\%$ for one simulation.
\begin{figure}[!h]
	\centering
	\subfigure{\includegraphics[height=1.67in,width=1.67in]{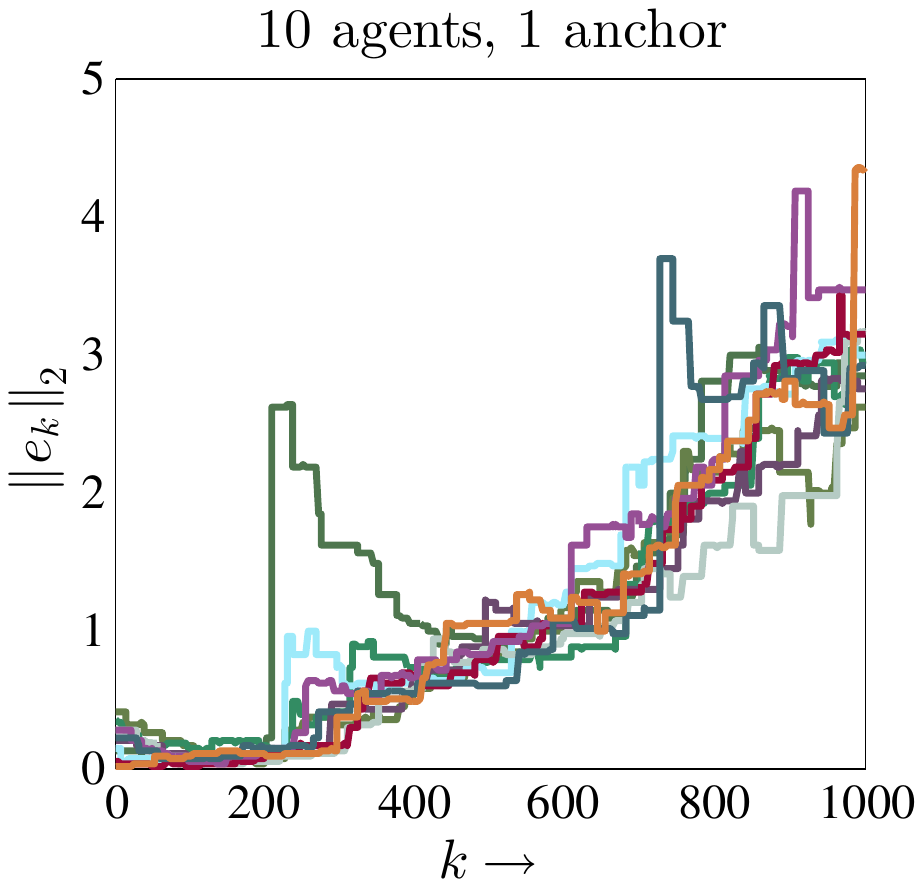}}
	\hspace{0.1cm}
	\subfigure{\includegraphics[height=1.67in,width=1.67in]{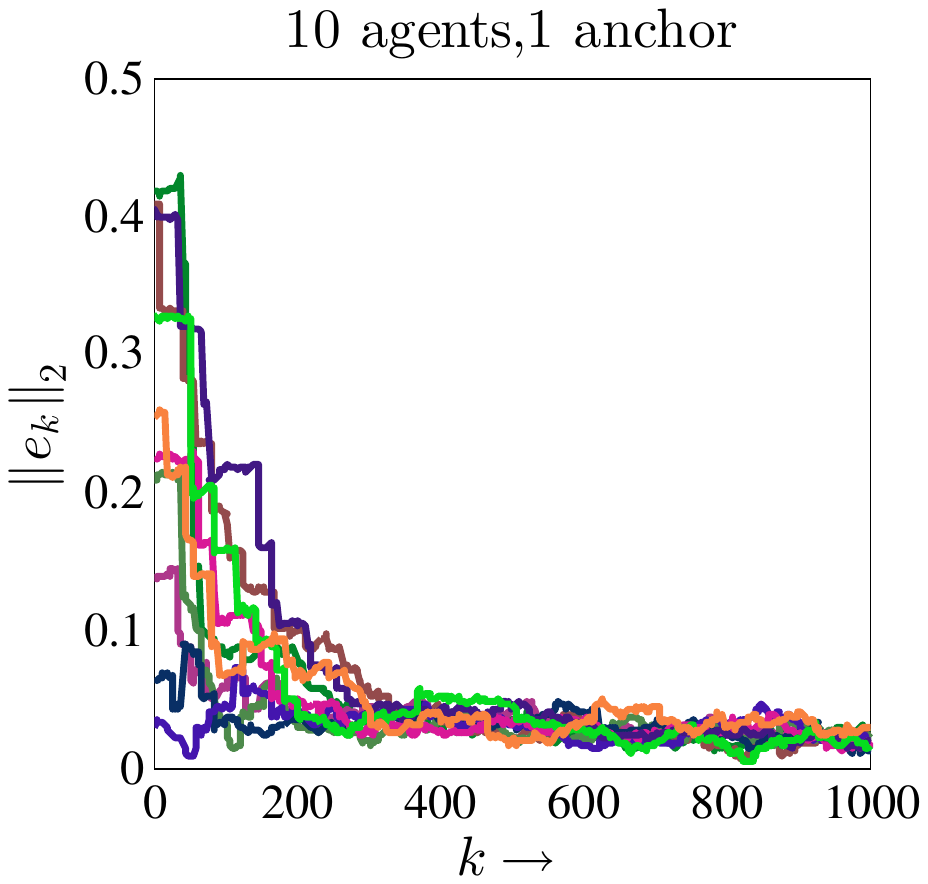}}
	\caption{\textcolor{black}{Effect of noise on the convergence of a network of~$10$ mobile agents and~$1$ mobile anchor with $\pm 10\%$ noise on distance measurements and $\pm 1\%$ noise on the motion; (Left) Original algorithm (Right) Modified algorithm.}}
	\label{f11-3}
\end{figure}
%
%
%

\noindent Finally, in Fig.~\ref{f14} we provide  $20$ Monte Carlo simulations for a network of $M=1$ mobile anchor and $N=10$ \textcolor{black}{and~$N=20$ mobile agents} in presence of noise.
\begin{figure}[!h]
	\centering
	\subfigure{\includegraphics[height=1.67in,width=1.67in]{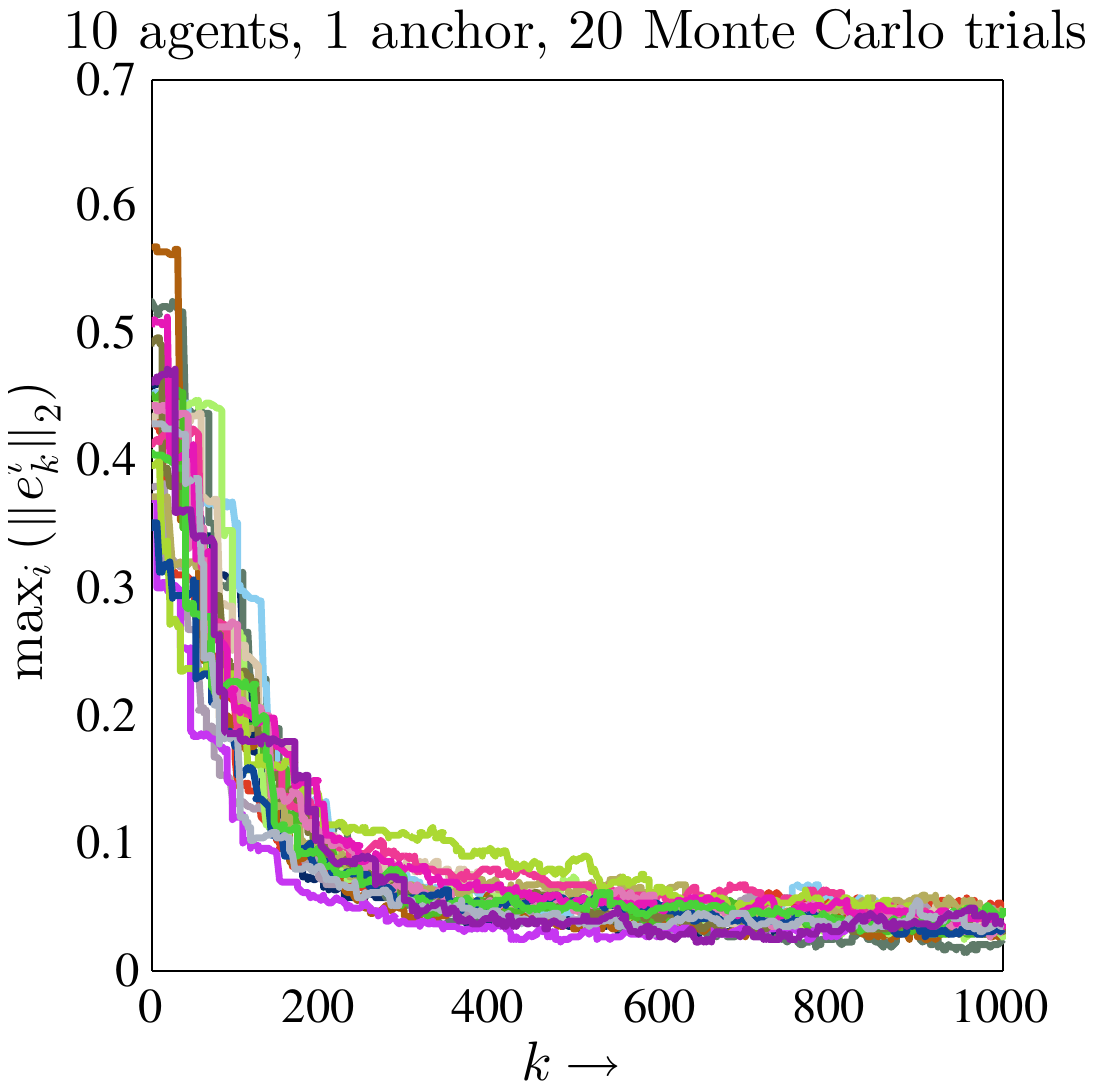}}
	\hspace{0.1cm}
	\subfigure{\includegraphics[height=1.67in,width=1.67in]{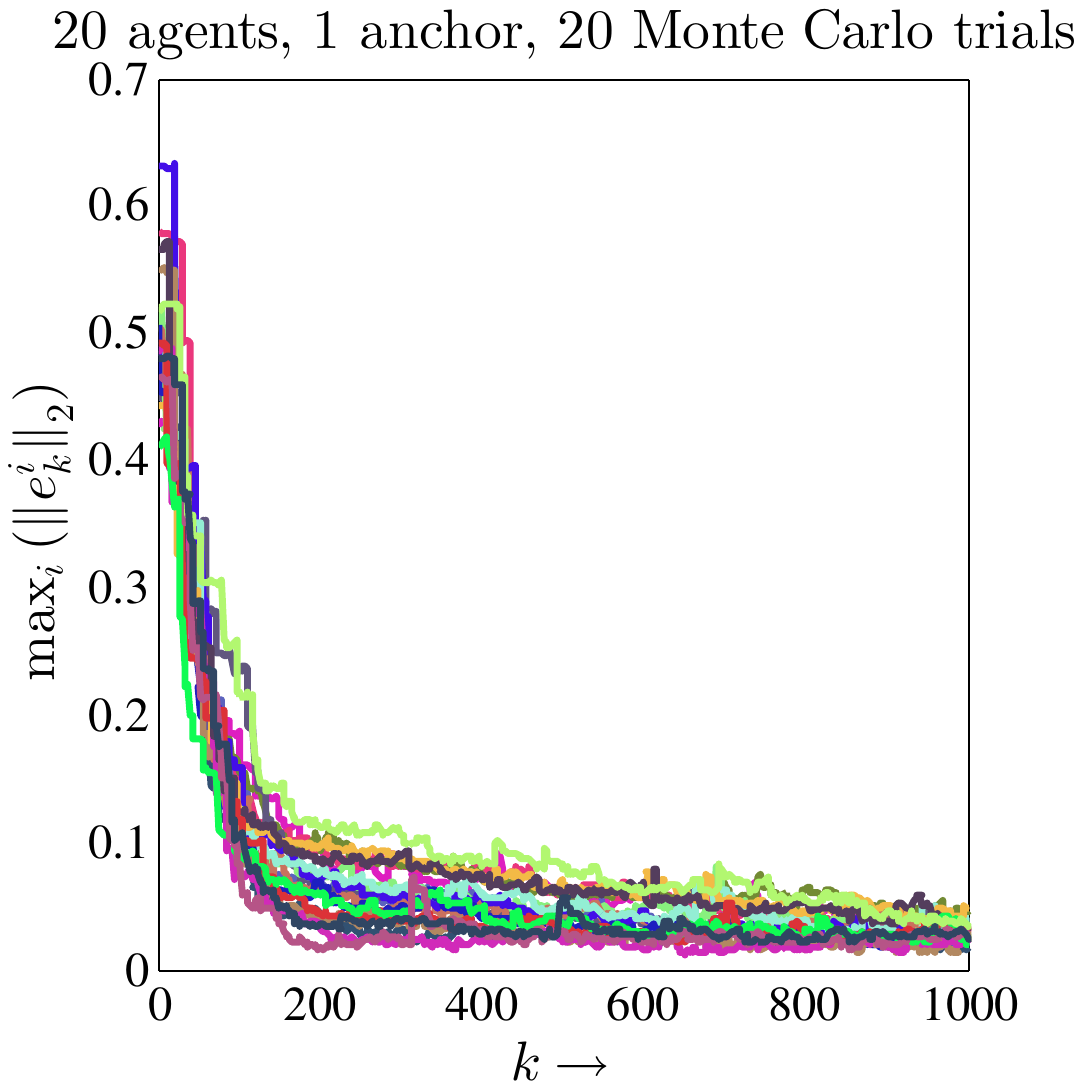}}
	\caption{\textcolor{black}{$20$ Monte Carlo trials; $\pm 10\%$ noise on distance measurements, $\pm 1\%$ noise on the motion; (Left) $N=10,~M=1$ (Right) \textcolor{black}{$N=20,~M=1$.}}}
	\label{f14}
\end{figure}

\subsection{Performance evaluation}\label{eval}
\textcolor{black}{We now evaluate the performance of our algorithm in contrast with some well-known localization methods; MCL~\cite{hu2004localization}, MSL*~\cite{rudafshani2007localization}, Dual MCL~\cite{stevens2007dual}, and Range-based SMCL~\cite{dil2006range}.
Please refer back to Section~\ref{sec1} for a brief description of these methods.
In Fig.~\ref{f16}, we compare the localization error in the Virtual Convex Hull (VCH) algorithm with MCL, MSL*, and Range-based SMC. 
As shown in Table~\ref{tbl1}, in these algorithms node density,~$n_d$, and anchor (seed) density,~$n_s$, denote as the average number of nodes and anchors in the neighborhood of an agent, respectively. Total number of agents and anchors can therefore be determined by knowing these densities as well as the area of the region. 
	\begin{figure}[!h]
		\centering
		\subfigure{\includegraphics[height=2.15in,width=2.15in]{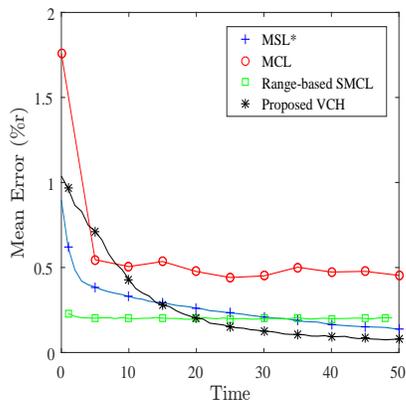}}
		\caption{Accuracy comparison}
		\label{f16}
	\end{figure}
We consider $N=20$ agents and $M=2$ anchors, and use the same metric as used in~\cite{hu2004localization,rudafshani2007localization,dil2006range}, i.e., the location error as a percentage of the communication range. Each data point in Fig.~\ref{f16} is computed by averaging the results of $20$ simulation experiments. We keep the other parameters the same as described earlier in this section. With high measurement noise, i.e., in the presence of $10\%$ noise on the range measurements and $1\%$ noise on the motion, our algorithm outperforms MCL, MSL* and Range-based SMCL after $20$ iterations. Clearly, the localization error in our algorithm decreases as the amount of noise decreases, and our algorithm converges to the exact agent locations in the absence of noise.}
\textcolor{black}{Table~\ref{tbl1} summarizes the performance of the proposed VCH algorithm in comparison to the above methods.}
\begin{table}[!h]
	\centering
	\subfigure{\includegraphics[height=1.7in]{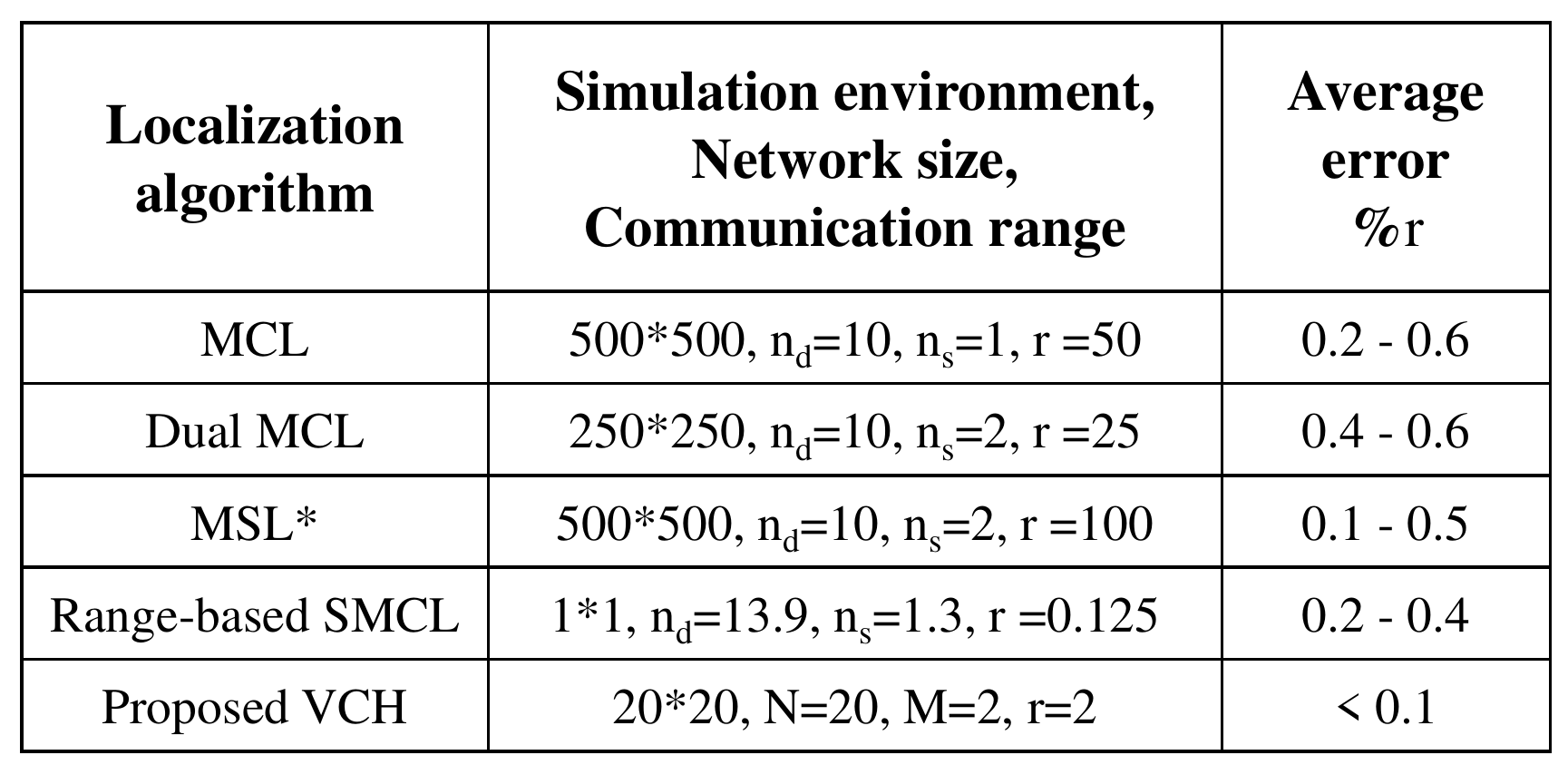}}
	\caption{Comparative performance of localization algorithms}
	\label{tbl1}
\end{table}

\section{Conclusions}\label{sec9}
In this paper, we provide a distributed algorithm to localize a network of mobile agents moving in a bounded region of interest. Assuming that each agent knows a noisy version of its motion and the distances to the nodes in its communication radius, we provide a geometric framework, which allows an agent to keep track of the distances to any previously \textcolor{black}{visited nodes}, and find the distance between a neighbor and any virtual location where it has exchanged information with other nodes in the past. 
Since agents are mobile, they may not be able to find neighbors to perform distributed updates at any time.
To avoid this issue, we introduce the notion of a virtual convex hull, which forms the basis of a range-based localization algorithm in mobile networks. We abstract the algorithm as an LTV system with (sub-) stochastic matrices, and show that it converges to the true locations of agents under some mild regularity conditions on update weights. We further show that exactly one anchor suffices to localize an arbitrary number of mobile agents when each agent is able to find appropriate (virtual) convex hulls. We evaluate the performance of the algorithm \textcolor{black}{in presence of noise and provide modifications to the proposed algorithm to address noise on motion and on distance measurements.}

\appendices
\vspace{-0.5cm}
\textcolor{black}{\section{Table of parameters}\label{tbl}
Table~\ref{figz} summarizes the important notation. 
	\begin{table*}
		\centering
		\includegraphics[width=190mm]{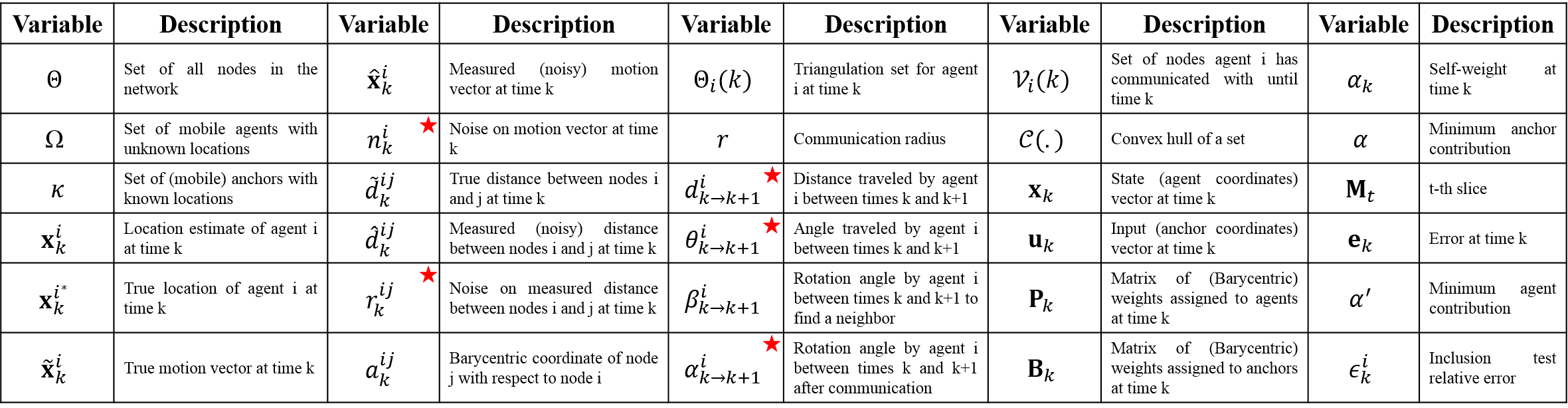}
		\caption{\textcolor{black}{Parameters and descriptions; random parameters are distinguished by \textcolor{red}{$\star$}}.}
		\label{figz}
	\end{table*}}

\section{Proof of Lemma 1}\label{pr1}
\begin{proof}
	Consider the triangle, \textcolor{black}{whose vertices are at}~$\mb{x}_{k_j}^{i\ast}$,~$\mb{x}_{k_j+1}^{i\ast}$ and~$\mb{x}_{k_j}^{j\ast}$. To find the current distance of agent~$i$, from the position of agent~$j$ at time~${k}_{j}$, we can use the \textit{law of cosines}, which connects the length of an unknown side of a triangle to the lengths of the two other sides and the angle opposite to the unknown side. This triangle is depicted in Fig.~\ref{f2}~(c), in which the two known side lengths are~$\textcolor{black}{\widetilde{d}_{k_j}^{ij}}$ and~${d_{{k_j\rightarrow {k_j}+1}}^{i}}$. Thus, knowing the angle between these sides of the triangle,~$\alpha^i_{k_j\rightarrow {k_j}+1}$, the length of the third side can be determined according to the following
	\begin{eqnarray}\label{5}
	(\widetilde{d}_{k_j+1}^{ij})^2 &=& (\textcolor{black}{\widetilde{d}_{k_j}^{ij}})^{2} + (\textcolor{black}{d_{{k_j\rightarrow k_{j+1}}}^{i}})^{2} \\\nonumber
	&-& 2{{\textcolor{black}{\widetilde{d}_{k_j}^{ij}}}{\textcolor{black}{d_{{k_j\rightarrow k_{j+1}}}^{i}}}\cos(\alpha^i_{k_j})},
	\end{eqnarray}
	which corresponds to Eq.~\eqref{3-1}, and completes the proof.
\end{proof}

\section{Proof of Lemma 2}\label{pr2}
\begin{proof}
	We illustrate this procedure in Fig.~\ref{fig1-1},
	where all known/measured distances and angles are distinguished from the unknowns by bold lines and bold arcs, respectively. 
	After agent $i$ makes contact and exchange information with node~$j$ at time $k_j$, it moves the distance of $d_{k_j\rightarrow k_j+1}^i$ to the new location,~$\mb{x}_{k_j+1}^{i\ast}$. At this point, agent $i$ can (use the law of cosines to) find $\widetilde{d}_{k_j+1}^{ij}$ as a function of \textcolor{black}{$\widetilde{d}_{k_j}^{ij}$},~\textcolor{black}{$d^{i}_{k_j\rightarrow k_j+1}$}, and $\alpha_{k_j}^{i}$, which are all known to this agent, i.e.,~$\widetilde{d}_{k_j+1}^{ij}=f(\textcolor{black}{\widetilde{d}_{k_j}^{ij}},\textcolor{black}{d_{k_j\rightarrow k_j+1}^{i}},\alpha_{k_j}^{i})$. Knowing $\widetilde{d}_{k_j+1}^{ij}$, agent $i$ can then find~\textcolor{black}{$\phi_1=g(\widetilde{d}_{k_j}^{ij}, d_{k_j\rightarrow k_j+1}^{i},\widetilde{d}_{k_j+1}^{ij})$}. 
	Note that given the three sides of a triangle, each angle can be computed.
	Since agent $i$ cannot find any neighbor at time $k_j+1$, we have~$\beta_{k_j+1}^i=2\pi$. Agent $i$ then changes its direction by $\alpha_{k_j+1}^{i}$ and travels the distance of $d_{k_j+1\rightarrow k_\ell}^i$ to the new location,~$\mb{x}_{k_\ell}^{i\ast}$, where $k_\ell=k_j+2$. At this point, agent $i$ can find~$\widetilde{d}_{k_\ell}^{ij}=f(\widetilde{d}_{k_j+1}^{ij} ,{{{d}}}_{k_j+1\rightarrow k_\ell}^{i},\phi_2)$, note that $\phi_2=\pi-\alpha_{k_j+1}^{i}+\phi_1$. Agent~$i$ can then find~\textcolor{black}{$\phi_3=g(\widetilde{d}_{k_\ell}^{ij},{{{d}}}_{k_j+1\rightarrow k_\ell}^{i},\widetilde{d}_{k_j+1}^{ij})$} at time~$k_\ell$. 
	Since agent $i$ finds node~$\ell$ within the communication radius at time $k_\ell$, it changes its direction by $\beta_{k_\ell}^{i}$ in order to make a contact with agent $\ell$.
	Finally, knowing \textcolor{black}{$\widetilde{d}_{k_\ell}^{i\ell}$},~$\widetilde{d}_{k_\ell}^{ij}$, and,~$\phi_4=\angle(\widetilde{d}_{k_\ell}^{ij},\textcolor{black}{{\widetilde{d}_{k_\ell}^{i\ell}}})=\pi-\phi_3-\beta_{k_\ell}^{i}$, agent $i$ can
	find the distance between~$\mb{x}_{k_j}^{j\ast}$ and~$\mb{x}_{k_\ell}^{\ell\ast}$ according to Eq.~\eqref{14}.
\end{proof}

\vspace{-7mm}
\textcolor{black}{\section{Proof of Lemma 3}\label{pr3}
\begin{proof}
Consider a network of $4$ agents,~$\{i,j,m,l\}$, and let us focus on finding a triangulation set for agent $i$. 
	\begin{figure}[!h]
		\centering
		\includegraphics[width=1.5in]{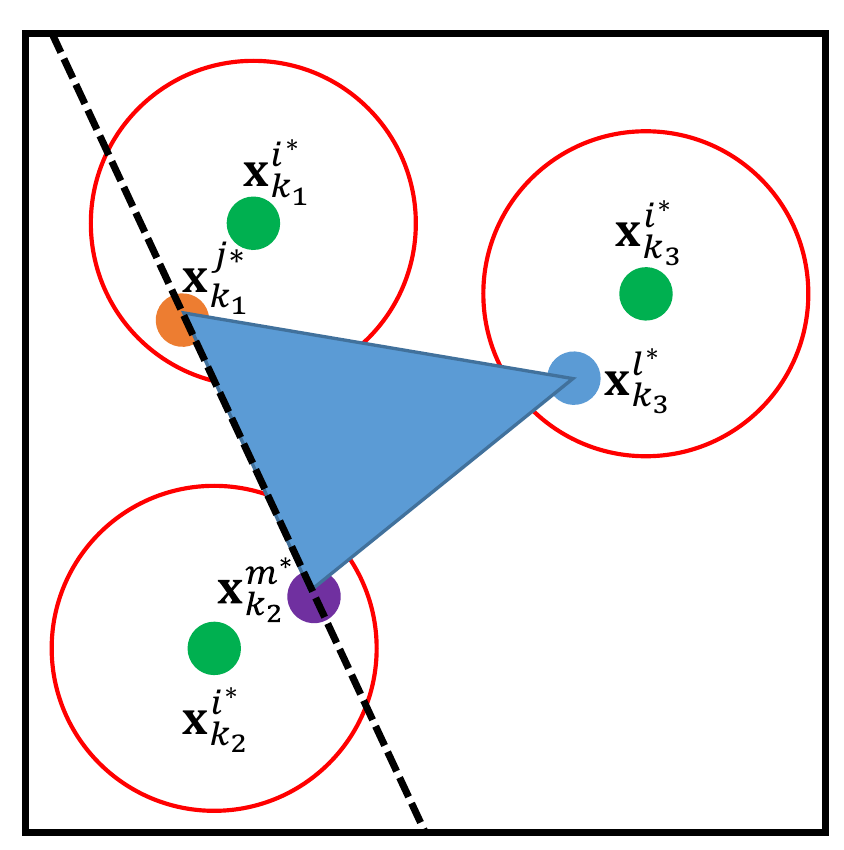}
		\caption{Existence of a virtual convex hull; Agent $i$ can perform an update if it lies inside the blue triangle at any time $k>k_3$.}
		\label{figlem3}
	\end{figure}
Suppose all agents are moving in an $n$ by~$n$ region, and the communication radius for each agent is $r$. In order for agent $i$ to communicate with another agent, say $j$, at time $k_1$, agent $j$ has to lie inside a circle with radius~$r$ centered at agent $i$'s location,~${\bf{x}}_{k_1}^{i\ast}$, see Fig.~\ref{figlem3}. The probability of such event at time $k_1$, is therefore given by $\mathbb{P}(j \in \mathcal{V}_i(k_1))=\frac{\pi r^2}{n^2}$\textcolor{black}{\footnote{\textcolor{black}{This expression assumes that agents are uniformly distributed over the region at any given time. This assumption can be justified by considering random initial deployment and motion. However, the proof follows as long as there is a nonzero probability for each agent to communicate with other agent(s) in the network.}}}. 
Similarly, the probability that agent $i$ communicates with agent $m$ at a later time,~$k_2 > k_1$, and with agent $l$, at time~$k_3 >k_2 >k_1$ can be given by~$\mathbb{P}(m \in \mathcal{V}_i(k_2))=\frac{\pi r^2}{n^2}$ and~$\mathbb{P}(l \in\mathcal{V}_i(k_3))=\frac{\pi r^2}{n^2}$, respectively. \textcolor{black}{Note that the probability of agent $m$ being exactly at ${\bf{x}}_{k_1}^{j\ast}$ at time $k_2$ is zero}, and therefore we can draw a virtual line (the dotted line in Fig.~\ref{figlem3}) between the locations at which agent~$i$ has communicated with agents $j$ and $m$,~i.e., between~${\bf{x}}_{k_1}^{j\ast}$ and ${\bf{x}}_{k_2}^{m\ast}$. Similarly the probability of agent $l$ lying on the aforementioned line at time $k_3$ is zero. Thus, the locations where agent $i$ can meet the other three agents at different times with nonzero probability, form a virtual convex hull. Therefore, the three agents $j$, $m$ and $l$ can pass the inclusion test for agent $i$ at any given time $k > k_3$, if agent $i$ lies inside the corresponding virtual triangle. The probability of such event at any given time, $k > k_3$, is given by
\begin{equation*}
\mathbb{P}(i\in\mc{C}({\bf{x}}_{k_1}^{j\ast},{\bf{x}}_{k_2}^{m\ast},{\bf{x}}_{k_3}^{l\ast}))=\frac{\mbox{area}(\triangle({\bf{x}}_{k_1}^{j\ast},{\bf{x}}_{k_2}^{m\ast},{\bf{x}}_{k_3}^{l\ast})}{n^2},
\end{equation*} 
which corresponds to a non-zero probability for agent $i$ to find a triangulation set.
\end{proof}}

\textcolor{black}{\section{Pseudocode}\label{psuedo}
	\begin{algorithm}
		\caption{Localize $N$ agents in $\mathbb{R}^{2}$ in the presence of $M$ anchors with precision $p$}
		\begin{algorithmic} 
			\REQUIRE $M \geq 1~\AND~N+M \geq 4$\\
			\STATE $k \leftarrow 0$
			\STATE ${\bf{x}}_0 \leftarrow \mbox{random initial coordinates}$
			\FOR {$i=1$ to $N$}
			\STATE ${\mathcal{V}}_i(0) = \emptyset$
			\ENDFOR
			\WHILE{$k<$ \textcolor{black}{termination criterion\footnotemark}}
			\STATE \textcolor{black}{$k \leftarrow k+1$}
			\FOR{$i=1$ to $N$}
			\STATE ${\mathcal{V}}_i(k) \leftarrow$ nodes in the communication radius of agent~$i$ at time $k$
			\IF{$0\leq\vert{\mathcal{V}}_i(k)\vert < 3$}
			\STATE do not update
			\ELSE
			\STATE perform inclusion test on (all possible combinations of) $3$ neighbors
			\IF{no triangulation set found}
			\STATE do not update
			\ELSE
			\STATE update location according to Eq.~\eqref{18}
			\ENDIF
			\ENDIF
			\ENDFOR
			\ENDWHILE
		\end{algorithmic}
	\end{algorithm}}
	\footnotetext{\textcolor{black}{The termination criterion can be designed according to the number of iterations typically needed given the size, mobility, models, and noise parameters, as evident from the simulation figures in Section~\ref{sec7}.}}

\section{Convex hull inclusion test}\label{incl}
In $\mathbb{R}^2$, a convex hull inclusion test is as follows 
{\small\begin{eqnarray}\label{convEQ}
i\in\mathcal{C}(\Theta_i(k)),\qquad \mbox{if } \sum_{j\in\Theta_i(k)}A_{\Theta_i(k)\cup\{i\}\setminus j} = A_{\Theta_i(k)},\\
i\notin\mathcal{C}(\Theta_i(k)),\qquad \mbox{if } \sum_{j\in\Theta_i(k)}A_{\Theta_i(k)\cup\{i\}\setminus j} > A_{\Theta_i(k)},
\end{eqnarray}}
\begin{figure}[!h]
	\centering
	\includegraphics[width=55mm]{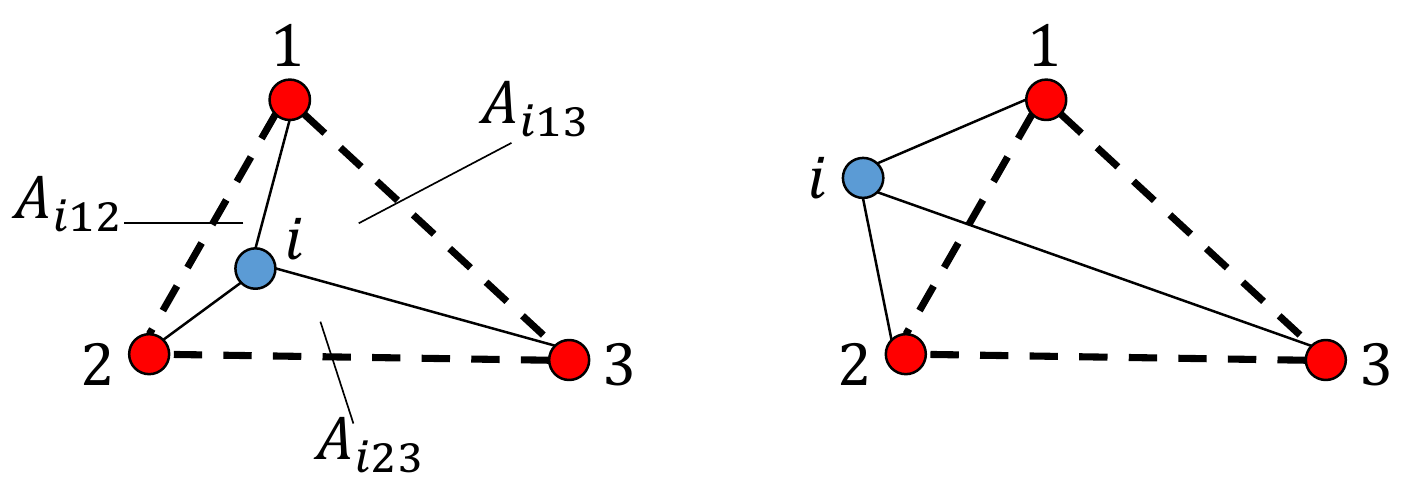}
	\caption{$\mbb{R}^2$: (Left) Robot~$i$ lies inside the triangle formed by the neighboring nodes,~$1$,~$2$ and~$3$
		(Right) Inclusion test is not passed.} 
	\label{fig1}
\end{figure}
in which~$``\setminus"$ denotes the set difference, $\mathcal{C}(\cdot)$ is the convex hull, and~$A_{\Theta_i(k)}$ represents the area of~$\mathcal{C}(\Theta_i(k))$, see Fig.~\ref{fig1}, and can be computed by using the Cayley-Menger determinant,~\cite{sippl1986cayley}. Thus,~$A_{\Theta_i(k)\cup\{i\}\setminus j}$ is the area of the set~$\Theta_i(k)$ with node $i$ added and node $j$ removed.

\bibliographystyle{IEEEtran}
\bibliography{bibliography}
\end{document}